\def\noheaderplainsetup{

\topmargin=0pt \headheight=0pt \headsep=0pt  \oddsidemargin=0pt \evensidemargin=0pt  \textheight=8.9truein \textwidth=6.2truein}
\begin{document}


\newcommand{\code}[1]{\ulcorner #1 \urcorner}
\newcommand{\mldi}{\hspace{2pt}\mbox{\footnotesize $\vee$}\hspace{2pt}}
\newcommand{\mlci}{\hspace{2pt}\mbox{\footnotesize $\wedge$}\hspace{2pt}}
\newcommand{\emptyrun}{\langle\rangle} 
\newcommand{\oo}{\bot}            
\newcommand{\pp}{\top}            
\newcommand{\xx}{\wp}               
\newcommand{\legal}[2]{\mbox{\bf Lr}^{#1}_{#2}} 
\newcommand{\win}[2]{\mbox{\bf Wn}^{#1}_{#2}} 
 \newcommand{\one}{\mbox{\sc One}}
 \newcommand{\two}{\mbox{\sc Two}}
 \newcommand{\three}{\mbox{\sc Three}}
 \newcommand{\four}{\mbox{\sc Four}}
 \newcommand{\first}{\mbox{\sc Derivation}}
 \newcommand{\second}{\mbox{\sc Second}}
 \newcommand{\uorigin}{\mbox{\sc Org}}
 \newcommand{\image}{\mbox{\sc Img}}
 \newcommand{\limitset}{\mbox{\sc Lim}}
 \newcommand{\fif}{\mbox{\bf CL15}}
\newcommand{\col}[1]{\mbox{$#1$:}}

\newcommand{\sti}{\mbox{\raisebox{-0.02cm}
{\scriptsize $\circ$}\hspace{-0.121cm}\raisebox{0.08cm}{\tiny $.$}\hspace{-0.079cm}\raisebox{0.10cm}
{\tiny $.$}\hspace{-0.079cm}\raisebox{0.12cm}{\tiny $.$}\hspace{-0.085cm}\raisebox{0.14cm}
{\tiny $.$}\hspace{-0.079cm}\raisebox{0.16cm}{\tiny $.$}\hspace{1pt}}}
\newcommand{\costi}{\mbox{\raisebox{0.08cm}
{\scriptsize $\circ$}\hspace{-0.121cm}\raisebox{-0.01cm}{\tiny $.$}\hspace{-0.079cm}\raisebox{0.01cm}
{\tiny $.$}\hspace{-0.079cm}\raisebox{0.03cm}{\tiny $.$}\hspace{-0.085cm}\raisebox{0.05cm}
{\tiny $.$}\hspace{-0.079cm}\raisebox{0.07cm}{\tiny $.$}\hspace{1pt}}}

\newcommand{\seq}[1]{\langle #1 \rangle}           

\newcommand{\pstb}{\mbox{\raisebox{-0.01cm}{\large $\wedge$}\hspace{-5pt}\raisebox{0.26cm}{\small $\mid$}\hspace{4pt}}}
\newcommand{\pcostb}{\mbox{\raisebox{0.22cm}{\large $\vee$}\hspace{-5pt}\raisebox{0.02cm}{\footnotesize $\mid$}\hspace{4pt}}}

\newcommand{\sst}{\mbox{\raisebox{-0.07cm}{\scriptsize $-$}\hspace{-0.2cm}$\pst$}} 

\newcommand{\scost}{\mbox{\raisebox{0.20cm}{\scriptsize $-$}\hspace{-0.2cm}$\pcost$}} 


\newcommand{\mla}{\mbox{{\Large $\wedge$}}}
\newcommand{\mle}{\mbox{{\Large $\vee$}}}

\newcommand{\pst}{\mbox{\raisebox{-0.01cm}{\scriptsize $\wedge$}\hspace{-4pt}\raisebox{0.16cm}{\tiny $\mid$}\hspace{2pt}}}
\newcommand{\gneg}{\neg}                  
\newcommand{\mli}{\rightarrow}                     
\newcommand{\cla}{\mbox{\large $\forall$}}      
\newcommand{\cle}{\mbox{\large $\exists$}}        
\newcommand{\mld}{\vee}    
\newcommand{\mlc}{\wedge}  
\newcommand{\ade}{\mbox{\Large $\sqcup$}}      
\newcommand{\ada}{\mbox{\Large $\sqcap$}}      
\newcommand{\add}{\sqcup}                      
\newcommand{\adc}{\sqcap}                      

\newcommand{\tlg}{\bot}               
\newcommand{\twg}{\top}               
\newcommand{\st}{\mbox{\raisebox{-0.05cm}{$\circ$}\hspace{-0.13cm}\raisebox{0.16cm}{\tiny $\mid$}\hspace{2pt}}}
\newcommand{\cst}{{\mbox{\raisebox{-0.05cm}{$\circ$}\hspace{-0.13cm}\raisebox{0.16cm}{\tiny $\mid$}\hspace{1pt}}}^{\aleph_0}} 
\newcommand{\cost}{\mbox{\raisebox{0.12cm}{$\circ$}\hspace{-0.13cm}\raisebox{0.02cm}{\tiny $\mid$}\hspace{2pt}}}
\newcommand{\ccost}{{\mbox{\raisebox{0.12cm}{$\circ$}\hspace{-0.13cm}\raisebox{0.02cm}{\tiny $\mid$}\hspace{1pt}}}^{\aleph_0}} 
\newcommand{\pcost}{\mbox{\raisebox{0.12cm}{\scriptsize $\vee$}\hspace{-4pt}\raisebox{0.02cm}{\tiny $\mid$}\hspace{2pt}}}

\newcommand{\psti}{\mbox{\raisebox{-0.02cm}{\tiny $\wedge$}\hspace{-0.121cm}\raisebox{0.08cm}{\tiny $.$}\hspace{-0.079cm}\raisebox{0.10cm}
{\tiny $.$}\hspace{-0.079cm}\raisebox{0.12cm}{\tiny $.$}\hspace{-0.085cm}\raisebox{0.14cm}
{\tiny $.$}\hspace{-0.079cm}\raisebox{0.16cm}{\tiny $.$}\hspace{1pt}}}

\newcommand{\pcosti}{\mbox{\raisebox{0.08cm}{\tiny $\vee$}\hspace{-0.121cm}\raisebox{-0.01cm}{\tiny $.$}\hspace{-0.079cm}\raisebox{0.01cm}
{\tiny $.$}\hspace{-0.079cm}\raisebox{0.03cm}{\tiny $.$}\hspace{-0.085cm}\raisebox{0.05cm}
{\tiny $.$}\hspace{-0.079cm}\raisebox{0.07cm}{\tiny $.$}\hspace{1pt}}}


\newtheorem{theoremm}{Theorem}[section]
\newtheorem{conditionss}{Condition}[section]
\newtheorem{thesiss}[theoremm]{Thesis}
\newtheorem{definitionn}[theoremm]{Definition}
\newtheorem{lemmaa}[theoremm]{Lemma}
\newtheorem{notationn}[theoremm]{Notation}\newtheorem{corollary}[theoremm]{Corollary}
\newtheorem{propositionn}[theoremm]{Proposition}
\newtheorem{conventionn}[theoremm]{Convention}
\newtheorem{examplee}[theoremm]{Example}
\newtheorem{remarkk}[theoremm]{Remark}
\newtheorem{factt}[theoremm]{Fact}
\newtheorem{exercisee}[theoremm]{Exercise}
\newtheorem{questionn}[theoremm]{Open Problem}
\newtheorem{conjecturee}[theoremm]{Conjecture}

\newenvironment{exercise}{\begin{exercisee} \em}{ \end{exercisee}}
\newenvironment{definition}{\begin{definitionn} \em}{ \end{definitionn}}
\newenvironment{theorem}{\begin{theoremm}}{\end{theoremm}}
\newenvironment{lemma}{\begin{lemmaa}}{\end{lemmaa}}
\newenvironment{proposition}{\begin{propositionn} }{\end{propositionn}}
\newenvironment{convention}{\begin{conventionn} \em}{\end{conventionn}}
\newenvironment{remark}{\begin{remarkk} \em}{\end{remarkk}}
\newenvironment{proof}{ {\bf Proof.} }{\  \rule{2.5mm}{2.5mm} \vspace{.2in} }
\newenvironment{idea}{ {\bf Idea.} }{\  \rule{1.5mm}{1.5mm} \vspace{.15in} }
\newenvironment{example}{\begin{examplee} \em}{\end{examplee}}
\newenvironment{fact}{\begin{factt}}{\end{factt}}
\newenvironment{notation}{\begin{notationn} \em}{\end{notationn}}
\newenvironment{conditions}{\begin{conditionss} \em}{\end{conditionss}}
\newenvironment{question}{\begin{questionn}}{\end{questionn}}
\newenvironment{conjecture}{\begin{conjecturee}}{\end{conjecturee}}

\title{A cirquent calculus system with clustering and ranking\thanks{Supported by National Natural Science Foundation of China (61303030) and the Fundamental Research Funds for the Central Universities of China (K5051370023).}}
\author{Wenyan Xu\\
{\it\small School of Mathematics and Statistics, Xidian University, Xi'an 710071, China}}
\date{}
\maketitle

\begin{abstract}
Cirquent calculus is a new proof-theoretic and semantic approach introduced by G.Japaridze for the needs of his theory of computability logic. The earlier article ``From formulas to cirquents in computability logic" by Japaridze generalized the concept of cirquents to the version with what are termed {\it clusterng} and {\it ranking}, and showed that, through cirquents with clustering and ranking, one can capture, refine and generalize the so called {\em extended IF logic}. Japaridze's treatment of extended IF logic, however, was purely semantical, and no deductive system was proposed. The present paper syntactically constructs a cirquent calculus system with clustering and ranking, sound and complete w.r.t. the propositional fragment of cirquent-based semantics. Such a system can be considered not only a conservative extension of classical propositional logic but also, when limited to cirquents with $\leq 2$ ranks, an axiomatization of purely propositional extended IF logic in its full generality.

\end{abstract}

\noindent {\em MSC}: primary: 03B47; secondary: 03B70; 68Q10; 68T27; 68T15.

\

\noindent {\em Keywords}: Computability logic; Cirquent calculus; IF logic.


\section{Introduction}\label{ssintr}
{\em Cirquent calculus} is a new proof-theoretic and semantic approach introduced by G.Japaridze \cite{Cir} for the needs of his theory of computability logic \cite{Jap03,beginning}.  Its main characteristic feature is being based on circuit-style structures called {\em cirquents}, as opposed to the more traditional approaches that manipulate tree-like objects such as formulas. Cirquents, unlike formulas, are allowing (one or another sort of) {\em sharing} of subcomponents between different components. Due to sharing, cirquent calculus has greater expressiveness and higher efficiency. For instance, as shown in \cite{deep}, the analytic cirquent calculus system CL8 achieves an exponential speedup of proofs over the classical analytic systems. Since its birth, cirquent calculus has been developed in a series of articles \cite{deep,fromto,taming1,taming2,wenyan1,wenyan2,wenyan3,wenyan4}.

The concept of cirquents was qualitatively generalized in \cite{fromto}, where the ideas of {\em clustering} and {\em ranking} were introduced. Intuitively, clusters are generalized disjunctive or conjunctive gates, i.e. switch-style devices that combine tuples of individual gates of a given type in a parallel way ---  in a way where the  choice ({\em left} or {\em right}) of an argument is shared between all members. Ranks are superior consoles of a certain subset of clusters, with all such consoles arranged in a linear order indicating in what order selections by the consoles should be made.

It was showed semantically in \cite{fromto} that, through cirquents with clustering both disjunctions and conjunctions and ranking, one can capture, refine and generalize the conservative extension of {\em independence-friendly (IF) logic} \cite{IF,IFBOOK} known as {\em extended IF logic} (cf. \cite{tul}). The latter, in addition to what IF logic calls {\em strong negation $\sim$}, also considers {\em weak negation $\neg$}.
The main distinguishing feature of (extended) IF logic is allowing one to express independence relations between quantifiers. But the past attempts (cf. \cite{A.P.,S.andP.}) to develop (extended) IF logic at the purely propositional level have remained limited to some special syntactic fragments of the language. Then the approach in \cite{fromto} allows
one to account for independence from propositional connectives in the same spirit as 
(extended) IF logic accounts for independence from quantifiers.

Japaridze's treatment of extended IF logic in \cite{fromto}, however, was purely semantical, and no deductive system was proposed. In this paper, we axiomatically construct a cirquent calculus system, called {\bf $RIF_p$}, with clustered disjunctive and conjunctive connectives and $n$ ranks for any positive integer $n$. Such a system is proved to be sound and complete w.r.t. the propositional fragment of Japaridze's cirquent-based semantics, and can be considered not only a conservative extension of classical propositional logic but also, when limited to cirquents with $\leq 2$ ranks, an axiomatization of purely propositional extended IF logic in its full generality.

\section{Preliminaries}
In this section we reproduce the basic concepts from \cite{fromto} on which the later parts of the paper will rely.
An interested reader may consult \cite{fromto} for additional
explanations, illustrations and examples.

Our propositional language has infinitely many  {\bf atoms}, for which $p,q,r,s,\ldots$ will be used as metavariables. An atom $p$ and its negation $\neg p$ are called {\bf literals}.
A {\bf formula} means one of the language of classical propositional logic, built from
literals and the binary connectives $\wedge,\vee$ in the standard way.  $A\rightarrow B$ is understood as an abbreviation of $\neg A\vee B$. And $\neg$, when applied to anything other than an atom, is  understood as an abbreviation defined by $\neg\neg A=A$, $\neg(A\wedge B)=\neg A\vee\neg B$ and $\neg(A\vee B)=\neg A\wedge\neg B$. Namely, all formulas are required to be in negation normal form.

A {\bf cirquent} is a formula together with two extra parameters called {\bf clustering} and {\bf ranking}, respectively. Clustering is a partition of the set of all occurrences of $\vee,\wedge$ into subsets, called {\bf clusters}, satisfying the condition that all occurrences of $\vee,\wedge$ within any given cluster have the same type. Ranking is a partition of the set of all $\vee$ and $\wedge$ clusters into subsets, called {\bf ranks}, arranged in a linear order, with each rank satisfying the condition that all clusters in it have the same type.\footnote{The concept of cirquents considered in cirquent calculus is more general than the one defined here. See \cite{fromto}.}  Each cluster is associated with a unique positive integer called its {\bf ID}. IDs serve as identifiers for clusters, and we will simply say  ``cluster $k$" to mean ``the cluster whose ID is $k$''. A rank containing $\wedge$-clusters is said to be {\bf conjunctive}, and a rank containing $\vee$-clusters {\bf disjunctive}. Since the ranks are linearly ordered, we will refer to them as the 1st rank, the 2nd rank, etc. or rank 1, rank 2, etc. Also, instead of ``cluster $k$ is in the $i$th  rank", we will say ``$k$ is of rank $i$".

One way to represent cirquents is to do so graphically, using arcs to indicate the occurrences of the connectives' ``clusteral affiliations" and  the clusters' ``rankal affiliations"  as  in the following figure:
\begin{center}
\begin{picture}(20,50)(0,0)
\put(-90,0){$\bigl((\neg s\vee s)\wedge(\neg r\vee r)\bigr)\vee\bigl((\neg p\vee s)\wedge(r\vee \neg r)\bigr)$}
\put(-46,17){\line(-5,-3){18}}\put(-46,17){\line(5,-2){28}}\put(-70,19){\footnotesize cluster $1$}
\put(24,17){\line(-6,-1){68}}\put(24,17){\line(3,-1){32}}\put(8,19){\footnotesize cluster $4$}
\put(58,17){\line(-2,-1){20}}\put(58,17){\line(2,-1){19}}\put(48,19){\footnotesize cluster $3$}
\put(-32,19){\footnotesize cluster $2$}\put(-20,17){\line(5,-2){26}}

\put(-45,39){\footnotesize rank $1$}\put(-7,39){\footnotesize rank $2$}\put(30,39){\footnotesize rank $3$}
\put(-32,37){\line(-2,-1){22}}\put(-32,37){\line(2,-1){22}}
\put(5,37){\line(2,-1){22}}
\put(40,37){\line(2,-1){22}}
\end{picture}
\end{center}
For space efficiency considerations, in this paper we will instead be writing cirquents just like formulas, only  with every occurrence of $\vee$ (resp. $\wedge$) indexed with a symbol $k^{i}$, called the {\bf index} of this occurrence, as an indication that this occurrence belongs to cluster $k$, and that such a cluster $k$ is of rank $i$.  So, for instance, the above cirquent will be simply written as
$\bigl((\neg s\vee_{1^{1}}s)\wedge_{4^{2}}(\neg r\vee_{1^{1}}r)\bigr)\vee_{2^{1}}\bigl((\neg p\vee_{3^{3}} s)\wedge_{4^{2}}(r\vee_{3^{3}} \neg r)\bigr)$.

A cirquent $\mathcal{C}$ is said to be {\bf classical} iff all of its clusters are singletons.
We shall identify such a cirquent with the formula of classical  logic obtained from it by simply deleting all indices, i.e. replacing each $\vee_{k^{i}}$ (resp. $\wedge_{k^i}$)(whatever $k$ and $i$) with just $\vee$ (resp. $\wedge$). Throughout the rest of this paper, we will be using the term {\bf oconnective} to
refer to a connective together with a particular occurrence of it in a cirquent.

An {\bf interpretation} (or {\bf model}) is a function $^{\ast}$ that sends each atom $p$ to one of the values $p^{\ast}\in\{\top,\bot\}$, and extends to all literals by stipulating that $(\neg p)^{\ast}=\top$ iff $p^{\ast}=\bot$.

Let $i\in\{1,2,3,\ldots\}$. An {\bf $i$-metaselection} is a function $f_i:\{1,2,3,\ldots\}\rightarrow\{\mbox{\em left,right}\}$. Let $\mathcal{C}$ be a cirquent with $n$ ranks.
A {\bf metaselection} for $\mathcal{C}$ is an $n$-tuple $\overrightarrow{f}=(f_1,\ldots,f_n)$ where, for each $1\leq i\leq n$, $f_i$ is an $i$-metaselection.

Given a cirquent $\mathcal{C}$ with $n$ ranks and a metaselection $\overrightarrow{f}=(f_1,\ldots,f_n)$ for $\mathcal{C}$, the {\bf resolvent} of a disjunctive (resp. conjunctive) subcirquent $\mathcal{A}\vee_{k^{i}} \mathcal{B}$ (resp. $\mathcal{A}\wedge_{k^{i}} \mathcal{B}$) of  $\mathcal{C}$ is defined to be $\mathcal{A}$ if $f_i(k)=\mbox{\em left}$, and $\mathcal{B}$ if $f_i(k)=\mbox{\em right}$.

Let $\mathcal{C}$ be a cirquent with $n$ ranks, $^{\ast}$ an interpretation, and $\overrightarrow{f}=(f_1,\ldots,f_n)$ a metaselection for $\mathcal{C}$. In this context, with ``metatrue" to be read as ``{\bf metatrue w.r.t. $(^{\ast},\overrightarrow{f})$}", we say that:
\begin{itemize}
\item A literal $L$ of $\mathcal{C}$ is metatrue iff $L^{\ast}=\top$.
\item A subcirquent $\mathcal{A}\vee_{k^{i}}\mathcal{B}$ (resp. $\mathcal{A}\wedge_{k^{i}}\mathcal{B}$) of $\mathcal{C}$  is metatrue iff so is its resolvent.
\end{itemize}
Next, we say that $\mathcal{C}$ is {\bf true} under the interpretation $^*$ (in the model $^*$), or simply that $\mathcal{C}^*$ is true, iff
\begin{center}
$\mathcal{Q}_1f_1\ldots\mathcal{Q}_nf_n$ such that $\mathcal{C}$ is metatrue w.r.t. $(^{\ast},(f_1,\ldots,f_n))$.
\end{center}
Here each $\mathcal{Q}_if_i$ ($1\leq i\leq n$) abbreviates the phrase ``for every $i$-metaselection $f_i$" if the $i$th rank of $\mathcal{C}$ is conjunctive, and ``there is an $i$-metaselection $f_i$" if the $i$th rank of $\mathcal{C}$ is disjunctive.
Finally, we say that $\mathcal{C}$ is {\bf valid} iff it is true under every interpretation (in every model).

Note that, when $\mathcal{C}$ is a formula, i.e. a cirquent where all clusters are singletons, $\mathcal{C}$ is valid iff it is valid (tautological) in the sense of classical logic. And classical truth of a formula under an interpretation $^*$ means nothing but $\mathcal{Q}_1f_1\ldots\mathcal{Q}_nf_n$ such that the formula is metatrue in our sense w.r.t. $(^{\ast},(f_1,\ldots,f_n))$. So, classical logic is nothing but the conservative fragment of our logic obtained by only allowing formulas in the language.

The rest of this section is not technically relevant to the main results of the present paper and is
only meant to understand
what all of the above has to do with extended IF logic.\footnote{For those unfamiliar with (extended) IF logic, they may want to consult \cite{IFBOOK} for the basic IF logic or \cite{tul} for the extended one.} Consider the  formula
\begin{equation}\label{wenzi0}
\forall x(\exists y/\forall x)\hspace{2pt} p(x,y)
\end{equation}
with its standard meaning. According to the latter, given any object $x$, the object $y$ can be chosen so that $p(x,y)$ is true, with the modifier `$\forall x$' attached to `$\exists y$' indicating that here $y$ can be chosen independently from (without any knowledge of)  $x$. Assuming that the universe of discourse is $\{1,2\}$, (\ref{wenzi0}) can just as well be (re-)written as
\begin{equation}
\bigl(p(1,1)\vee^y\hspace{-3pt}/\hspace{-3pt}\wedge^x p(1,2)\bigr)\wedge^x\bigl(p(2,1)\vee^y\hspace{-3pt}/\hspace{-3pt}\wedge^x p(2,2)\bigr),
\end{equation}
which, after further rewriting $p(1,1),p(1,2),p(2,1),p(2,2)$ as the more compact $p,q,r,s$, is the propositional formula
\begin{equation}\label{wenzi}
(p\vee^y\hspace{-3pt}/\hspace{-3pt}\wedge^x q)\wedge^x(r\vee^y\hspace{-3pt}/\hspace{-3pt}\wedge^x s).
\end{equation}
Here we have turned $\forall x$ into $\wedge^x$, $\exists y$ into $\vee^y$, with the superscript in each case used just to remind us from which quantifier each oconnective was obtained, and $\vee^y\hspace{-1pt}/\hspace{-1pt}\wedge^x$ indicating that the $y$-superscripted disjunction is independent of the $x$-superscripted conjunction. Now, Japaridze's recipes (see \cite{fromto}, Descriptions 7.4, 7.5 and 7.6) translate (\ref{wenzi}) into the following cirquent:
\begin{equation}\label{wenzi2}
(p\vee_{1^1} q)\wedge_{2^2}(r\vee_{1^1} s).
\end{equation}
Note that cluster $1$ contains two disjunctive oconnectives --- namely, those originating from $\exists y$,  cluster $2$ is a conjunctive singleton, all disjunctive clusters are of rank 1, and all conjunctive clusters are of rank 2. It is left as an exercise for the reader to convince himself or herself that, in any given model (interpretation) $^*$, (\ref{wenzi2}) is true in our sense if and only if (\ref{wenzi}) is true in the sense of extended IF logic.
Similarly, the two forms of (\ref{wenzi0})'s negation $\neg\forall x(\exists y/\forall x)\hspace{2pt} p(x,y)$ and $\sim\forall x(\exists y/\forall x)\hspace{2pt} p(x,y)$ are translated into the cirquents $(\neg p\wedge_{1^1} \neg q)\vee_{2^2}(\neg r\wedge_{1^1} \neg s)$ and $(\neg p\wedge_{1^2} \neg q)\vee_{2^1}(\neg r\wedge_{1^2} \neg s)$, respectively. Since the above translations only generate cirquents with $\leq 2$  ranks, these sorts of cirquents
are sufficient for capturing extended IF logic.

Well, the present case is a ``lucky'' case because we easily understand what ``true in the sense of extended IF logic'' means for (\ref{wenzi}) --- after all, (\ref{wenzi}) originates from (and will be handled in the same way as) the first-order (\ref{wenzi0}). As an example of an ``unlucky'' case, consider the cirquent
\begin{equation} \label{wenzi3}
(r\vee_{1^1} s)\wedge_{2^2}\bigl((p\vee_{1^1} q)\wedge_{3^2} q\bigr).
 \end{equation}
It is just as meaningful from the point of view of our semantics as any other cirquent, including (\ref{wenzi2}). An attempt to express the same in the traditional formalism of IF logic apparently yields something like
\begin{equation} \label{wenzi33}
(r\vee\hspace{-3pt}/\hspace{-3pt}\wedge^x s)\wedge^x\bigl((p\vee\hspace{-3pt}/\hspace{-3pt}\wedge^x q)\wedge^y q\bigr).
\end{equation}
Unlike (\ref{wenzi}), however, (\ref{wenzi33}) is problematic for the traditional semantical approaches (the ones based on imperfect information games) to IF logic. Namely, because of a problem called signaling, it is far from clear how its truth should be understood. 

If the connections and differences between our present semantics and that of extended IF logic are still not clear, see the first 7 sections of \cite{fromto} for more explanations, discussions and examples.

\section{Main results}
\subsection{System $RIF_p$ introduced}
Our system introduced in this paper is called {\bf $RIF_p$} (ranked IF logic at the propositional level). As will be seen shortly, the inference rules of $RIF_p$
modify cirquents at any level rather than only around the root. Thus, $RIF_p$ is in fact a {\em deep inference} system, in the style of \cite{deepinference}. This explains our borrowing some notation from the Calculus of Structures.
Namely, we
will be using $\Phi\{\}$ or $\Psi\{\}$ to denote any cirquent where a vacancy (``hole'') $\{\}$ appears in the place of a subcirquent. The vacancy $\{\}$ can be filled with any cirquent. For example, if $\Phi\{\} = (p\vee_{1^{2}} q)\vee_{1^{2}}(\{\}\wedge_{2^{1}}q)$, then $\Phi\{\neg p\} = (p\vee_{1^{2}} q)\vee_{1^{2}}(\neg p\wedge_{2^{1}} q)$, $\Phi\{q\} = (p\vee_{1^{2}} q)\vee_{1^{2}}(q\wedge_{2^{1}} q)$, and $\Phi\{p\vee_{1^{2}} q\} = (p\vee_{1^{2}} q)\vee_{1^{2}}((p\vee_{1^{2}} q)\wedge_{2^{1}} q)$.

Further, we will be using $\mathcal{C}[k^i]$ to denote a cirquent $\mathcal{C}$ that contains some occurrence of $k^i$ in its representation, where $k^i$ is the index of some oconnective of $\mathcal{C}$. And we will be using $\mathcal{C}[k^i/l^i]$ to denote the resulting cirquent from $\mathcal{C}[k^i]$ by replacing all the occurrences of $k^i$ in $\mathcal{C}[k^i]$ by $l^i$.

Below comes the {\bf inference rules} of $RIF_p$, where $\mathcal{A,B,C,D}$ stand for any cirquents; $\odot$ and $\circ$ are variables over $\{\wedge,\vee\}$.\footnote{In the remaining of this paper, without any further indication, $\odot$ (resp. $\circ$) will always stand for a variable over $\{\wedge,\vee\}$.} It is important to point out that,  in each rule, {\em all} occurrences of $\odot$ (resp. $\circ$) stand for the same object.\vspace{2mm}

{\bf Rule I:} This rule has two versions, Rule I (left) and Rule I (right), as shown in the following figure, where $k^{i}$ are indices of oconnectives $\odot$.
\begin{center}
\begin{picture}(80,40)(0,165)
\put(-45,0){\begin{picture}(80,200)\put(-79,190){$\Phi\bigl\{\Psi\{\mathcal{A}\}\odot_{k^{i}}\mathcal{C}\bigr\}$}
\put(-89,184){\line(1,0){93}}\put(7,183){\footnotesize\bf Rule I (left)}
\put(-90,172){$\Phi\bigl\{\Psi\{\mathcal{A}\odot_{k^{i}}\mathcal{B}\}\odot_{k^{i}}\mathcal{C}\bigr\}$}\end{picture}}

\put(150,0){\begin{picture}(80,200)\put(-79,190){$\Phi\bigl\{\mathcal{C}\odot_{k^{i}}\Psi\{\mathcal{A}\}\bigr\}$}
\put(-89,184){\line(1,0){93}}\put(7,183){\footnotesize\bf Rule I (right)}
\put(-90,172){$\Phi\bigl\{\mathcal{C}\odot_{k^{i}}\Psi\{\mathcal{B}\odot_{k^{i}}\mathcal{A}\}\bigr\}$}\end{picture}}
\end{picture}
\end{center}

{\bf Rule II:} This rule also has two versions, Rule II (left) and Rule II (right), as shown in the following figure, where (i) $k^{i},l^{j},m^{j},n^{j}$ are indices of oconnectives satisfying the condition that when cluster $l$ is a non-singleton in the conclusion, $m=n=l$; when cluster $l$ is a singleton in the conclusion, $m,n$ are any positive integers such that clusters $m,n$ are sigletons in the premise; (ii) $i\leq j$; (iii) all the ranks $t$ that occurs in $\mathcal{C}$ (in the conclusion) satisfy the condition that $t\geq i$; (iv) $\mathcal{C}_1$ (resp. $\mathcal{C}_2$) is the resulting cirquent from $\mathcal{C}$  by replacing in $\mathcal{C}$ each singleton cluster $s$ of (whatever) rank $r$ with singleton cluster $s_1$ (resp. $s_2$) of the same rank $r$ in the premise.
\begin{center}
\begin{picture}(80,40)(0,118)
\put(-63,-50){\begin{picture}(80,200)\put(-95,190){$\Phi\bigl\{(\mathcal{A}\circ_{m^{j}}\mathcal{C}_1)\odot_{k^{i}}(\mathcal{B}\circ_{n^{j}}\mathcal{C}_2)\bigr\}$}
\put(-95,184){\line(1,0){125}}\put(34,183){\footnotesize\bf Rule II (left)}
\put(-83,172){$\Phi\bigl\{(\mathcal{A}\odot_{k^{i}}\mathcal{B})\circ_{l^{j}}\mathcal{C}\bigr\}$}\end{picture}}

\put(152,-50){\begin{picture}(80,200)\put(-93,190){$\Phi\bigl\{(\mathcal{C}_1\circ_{m^{j}}\mathcal{A})\odot_{k^{i}}(\mathcal{C}_2\circ_{n^{j}}\mathcal{B})\bigr\}$}
\put(-93,184){\line(1,0){125}}\put(35,183){\footnotesize\bf Rule II (right)}
\put(-82,172){$\Phi\bigl\{\mathcal{C}\circ_{l^{j}}(\mathcal{A}\odot_{k^{i}}\mathcal{B})\bigr\}$}\end{picture}}
\end{picture}
\end{center}

{\bf Rule III:} This rule is shown in the following figure, where (i) $k^{i},l^{j},m^{j},n^{j}$ are indices of oconnectives satisfying the condition that when cluster $l$ is a non-singleton in the conclusion, $m=n=l$; when cluster $l$ is a singleton in the conclusion, $m,n$ are any positive integers such that clusters $m,n$ are sigletons in the premise; (ii) $i\leq j$.
\begin{center}
\begin{picture}(80,40)(-110,68)
\put(-53,-100){\begin{picture}(80,200)\put(-95,190){$\Phi\bigl\{(\mathcal{A}\circ_{m^{j}}\mathcal{C})\odot_{k^{i}}(\mathcal{B}\circ_{n^{j}}\mathcal{D})\bigr\}$}
\put(-97,184){\line(1,0){124}}\put(30,183){\footnotesize\bf Rule III}
\put(-94,172){$\Phi\bigl\{(\mathcal{A}\odot_{k^{i}}\mathcal{B})\circ_{l^{j}}(\mathcal{C}\odot_{k^{i}}\mathcal{D})\bigr\}$}\end{picture}}
\end{picture}
\end{center}

{\bf Rule IV:} This rule is shown in the following figure, where (i) $k^{i},l^{j},r^{j}$ are indices of oconnectives satisfying the conditions that $l$ only occurs in the subcirquents $\mathcal{A}$ and $\mathcal{B}$ in the conclusion, and that $r$ (in the premise) is any positive integer such that $r$ does not occur in the conclusion; (ii) $i<j$.
\begin{center}
\begin{picture}(80,40)(90,68)
\put(142,-100){\begin{picture}(80,200)
\put(-75,190){$\Phi\bigl\{\mathcal{A}[l^{j}]\odot_{k^{i}}\mathcal{B}[l^{j}/r^{j}]\bigr\}$}
\put(-80,184){\line(1,0){100}}\put(23,183){\footnotesize\bf Rule IV}
\put(-68,172){$\Phi\bigl\{\mathcal{A}[l^{j}]\odot_{k^{i}}\mathcal{B}[l^{j}]\bigr\}$}
\end{picture}}
\end{picture}
\end{center}

It is obvious that all rules of $RIF_p$ preserve the liner order of ranks of cirquents in both top-down and bottom-up directions. In each application of these rules, we call the oconnective(s) $\odot_{k^{i}}$ in the premise (resp. conclusion), as shown in the above corresponding figures, the {\bf key oconnecitve(s)} of this application in the premise (resp. conclusion).\vspace{2mm}

The {\bf axioms} of  $RIF_p$ are all classical cirquents that (seen as formulas)
are tautologies of classical propositional logic.\vspace{2mm}

A {\bf proof} of a cirquent $\mathcal{C}$ in $RIF_p$ is a sequence of cirquents such that the first cirquent in the sequence is an axiom of $RIF_p$, the last cirquent is  $\mathcal{C}$, and every cirquent, except the axiom, follows from the preceding cirquent by one of the rules of $RIF_p$. When such a proof exists, $\mathcal{C}$ is said to be {\bf provable} in $RIF_p$.

\begin{lemma}\label{yan2}
Given a cirquent $\Phi\{\mathcal{A}\odot_{k^{i}}\mathcal{B}\}$, a metaselection $\overrightarrow{f}=(f_1,\ldots,f_n)$ for it and an  interpretation $^{\ast}$, the cirquent $\Phi\{\mathcal{A}\odot_{k^{i}}\mathcal{B}\}$ is metatrue w.r.t. $(^{\ast},\overrightarrow{f})$ iff $f_i(k)=\mbox{\em left}$ (resp. $f_i(k)=\mbox{\em right}$)  and
the cirquent $\Phi\{\mathcal{A}\}$ (resp. $\Phi\{\mathcal{B}\}$) is metatrue w.r.t. $(^{\ast},\overrightarrow{f})$.
\end{lemma}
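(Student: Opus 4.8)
The plan is to prove the biconditional by structural induction on the context $\Phi\{\}$, equivalently on the depth at which the hole sits below the root of the cirquent; since the statement is itself an ``iff'', the induction establishes both directions at once. The governing intuition is that the metatruth status of a cirquent w.r.t.\ $(^{\ast},\overrightarrow{f})$ is computed by descending from the root, at each oconnective $\circ_{l^{j}}$ passing to its resolvent (the left argument if $f_j(l)=\mbox{\em left}$, the right one otherwise), until a literal is reached, whose value under $^{\ast}$ then settles the matter; the lemma merely records how this descent behaves at the displayed oconnective $\odot_{k^{i}}$. Two bookkeeping facts will be used without comment. (a) Whether a subcirquent is metatrue w.r.t.\ $(^{\ast},\overrightarrow{f})$ depends only on that subcirquent and on $(^{\ast},\overrightarrow{f})$, not on the ambient cirquent, and moreover appeals only to those components $f_r$ of $\overrightarrow{f}$ with $r$ a rank occurring in the subcirquent --- both immediate by induction on the subcirquent using the two defining clauses of ``metatrue'' --- so the single metaselection $\overrightarrow{f}$ legitimately serves at once for $\Phi\{\mathcal{A}\odot_{k^{i}}\mathcal{B}\}$, $\Phi\{\mathcal{A}\}$, $\Phi\{\mathcal{B}\}$ and all their subcirquents. (b) $\Phi\{\mathcal{A}\}$ and $\Phi\{\mathcal{B}\}$ are again cirquents, with clustering and ranking inherited from $\Phi\{\mathcal{A}\odot_{k^{i}}\mathcal{B}\}$, any cluster or rank left empty by the deletion being simply discarded.

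For the basis, $\Phi\{\}=\{\}$, so $\Phi\{\mathcal{A}\odot_{k^{i}}\mathcal{B}\}$ is literally $\mathcal{A}\odot_{k^{i}}\mathcal{B}$. By the definition of ``metatrue'', this cirquent is metatrue w.r.t.\ $(^{\ast},\overrightarrow{f})$ iff its resolvent is, and its resolvent is $\mathcal{A}=\Phi\{\mathcal{A}\}$ when $f_i(k)=\mbox{\em left}$ and $\mathcal{B}=\Phi\{\mathcal{B}\}$ when $f_i(k)=\mbox{\em right}$ --- which is exactly the asserted equivalence.

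For the inductive step, $\Phi\{\}$ is not the bare hole, so it has the form $\Psi\{\}\circ_{l^{j}}\mathcal{D}$ or $\mathcal{D}\circ_{l^{j}}\Psi\{\}$ for a strictly shallower context $\Psi\{\}$; take the first form, the other being symmetric. If $f_j(l)=\mbox{\em right}$, then for every filling $\mathcal{X}$ the resolvent of $\Phi\{\mathcal{X}\}$ is $\mathcal{D}$, so each of $\Phi\{\mathcal{A}\odot_{k^{i}}\mathcal{B}\}$, $\Phi\{\mathcal{A}\}$ and $\Phi\{\mathcal{B}\}$ is metatrue iff $\mathcal{D}$ is; as $f_i(k)$ takes one of the values $\mbox{\em left},\mbox{\em right}$, the right-hand side of the target equivalence likewise collapses to ``$\mathcal{D}$ is metatrue'', and the two sides agree. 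If $f_j(l)=\mbox{\em left}$, then the resolvent of $\Phi\{\mathcal{X}\}$ is $\Psi\{\mathcal{X}\}$, so $\Phi\{\mathcal{X}\}$ is metatrue iff $\Psi\{\mathcal{X}\}$ is, for each $\mathcal{X}\in\{\mathcal{A}\odot_{k^{i}}\mathcal{B},\,\mathcal{A},\,\mathcal{B}\}$; composing these three equivalences with the induction hypothesis applied to $\Psi\{\}$ yields the claim for $\Phi\{\}$.

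I do not expect a genuine obstacle: the argument is routine. The only care needed goes into the bookkeeping items (a) and (b) and into not forgetting the case split on the value $f_j(l)$ of the outermost oconnective in the inductive step --- the resolvent at $\circ_{l^{j}}$ either steering clear of the hole or entering it --- for it is precisely this split that carries the (modest) content of the lemma.
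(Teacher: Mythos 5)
Your proof is correct and follows essentially the same route as the paper's: induction on the size of the context $\Phi\{\}$, a basis with the bare hole handled directly by the definition of resolvent, and an inductive step that splits on the value $f_j(l)$ at the outermost oconnective (the ``vacuous'' collapse when the resolvent avoids the hole, the induction hypothesis when it enters it). The only difference is cosmetic: you treat all outer oconnectives uniformly, whereas the paper separately discusses singleton versus non-singleton clusters and the case $m=k$, a distinction that plays no real role in the argument.
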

\begin{proof}
We prove the proposition by induction on the number of oconnectives of $\Phi\{\}$.

For the basis, assume that the number of oconnectives of $\Phi\{\}$ is $0$. Then $\Phi\{\mathcal{A}\odot_{k^{i}}\mathcal{B}\}=\mathcal{A}\odot_{k^{i}}\mathcal{B}$, $\Phi\{\mathcal{A}\}=\mathcal{A}$ and $\Phi\{\mathcal{B}\}=\mathcal{B}$. By the definition of metatruth, we immediately have $\mathcal{A}\odot_{k^{i}}\mathcal{B}$ is metatrue w.r.t. $(^{\ast},\overrightarrow{f})$ if and only if $f_i(k)=\mbox{\em left}$ (resp. $f_i(k)=\mbox{\em right}$) and its resolvent $\mathcal{A}$ (resp. $\mathcal{B}$) is so.

Now (induction hypothesis) assume that the proposition holds when the number of oconnectives of $\Phi\{\}$ is $n$.
We want to show that the proposition still holds when the number of oconnectives of $\Phi\{\}$ is $n+1$. Two cases (i), (ii) are to be considered here:

(i) Assume that the main connective of $\Phi\{\}$ is $\circ\in\{\wedge,\vee\}$ whose index is $l^{j}$ such that cluster $l$ is a singleton. Namely, assume $\Phi\{\mathcal{A}\odot_{k^{i}}\mathcal{B}\}=\Psi\{\mathcal{A}\odot_{k^{i}}\mathcal{B}\}\circ_{l^{j}}\mathcal{C}$  for some cirquent $\mathcal{C}$ (the other possibility
$\Phi\{\mathcal{A}\odot_{k^{i}}\mathcal{B}\}= \mathcal{C} \circ_{l^{j}}\Psi\{\mathcal{A}\odot_{k^{i}}\mathcal{B}\}$ is similar).  The following two (sub)cases need to be further considered.

{\it Case (a)}: $f_j(l)=\mbox{\em right}$.  Then $\Psi\{\mathcal{A}\odot_{k^{i}}\mathcal{B}\}\circ_{l^{j}}\mathcal{C}$ is metatrue w.r.t. $(^{\ast},\overrightarrow{f})$ iff its resolvent $\mathcal{C}$ is so. But exactly the same holds for both $\Psi\{\mathcal{A}\}\circ_{l^{j}}\mathcal{C}$ and $\Psi\{\mathcal{B}\}\circ_{l^{j}}\mathcal{C}$ (for the same reason).  Thus, vacuously adding ``$f_i(k)=\ldots$'', we arrive at the desired conclusion that $\Psi\{\mathcal{A}\odot_{k^{i}}\mathcal{B}\}\circ_{l^{j}}\mathcal{C}$ is metatrue w.r.t. $(^{\ast},\overrightarrow{f})$ iff $f_i(k)=\mbox{\em left}$ (resp. $f_i(k)=\mbox{\em right}$) and $\Psi\{\mathcal{A}\}\circ_{l^{j}}\mathcal{C}$ (resp. $\Psi\{\mathcal{B}\}\circ_{l^{j}}\mathcal{C}$) is metatrue w.r.t. $(^{\ast},\overrightarrow{f})$.

{\it Case (b)}: $f_j(l)=\mbox{\em left}$. Then $\Psi\{\mathcal{A}\odot_{k^{i}}\mathcal{B}\}\circ_{l^{j}}\mathcal{C}$ is metatrue w.r.t. $(^{\ast},\overrightarrow{f})$ iff its resolvent $\Psi\{\mathcal{A}\odot_{k^{i}}\mathcal{B}\}$ is so, which, in turn (by the induction hypothesis), is the case  iff  $f_i(k)=\mbox{\em left}$ (resp. $f_i(k)=\mbox{\em right}$) and $\Psi\{\mathcal{A}\}$ (resp. $\Psi\{\mathcal{B}\}$) is metatrue w.r.t. $(^{\ast},\overrightarrow{f})$. This, in turn, is the case iff $f_i(k)=\mbox{\em left}$ (resp. $f_i(k)=\mbox{\em right}$) and $\Phi\{\mathcal{A}\}$ (resp. $\Phi\{\mathcal{B}\}$) is metatrue w.r.t. $(^{\ast},\overrightarrow{f})$. Hence the desired conclusion holds.

(ii) Assume that the main connective of $\Phi\{\}$ is $\circ\in\{\wedge,\vee\}$ whose index is $m^{j}$ such that cluster $m$ is a non-singleton. Namely, assume $\Phi\{\mathcal{A}\odot_{k^{i}}\mathcal{B}\}=\Psi\{\mathcal{A}\odot_{k^{i}}\mathcal{B}\}\circ_{m^{j}}\mathcal{C}$  for some cirquent $\mathcal{C}$ (the other possibility
$\Phi\{\mathcal{A}\odot_{k^{i}}\mathcal{B}\}= \mathcal{C} \circ_{m^{j}}\Psi\{\mathcal{A}\odot_{k^{i}}\mathcal{B}\}$ is similar). If $m\neq k$, then we can employ an essentially the same argument as the one used in (i).  And if
$m=k$ (so that $i=j$), then the case is even simpler, so we leave details to the reader.
\end{proof}

\begin{lemma}\label{niu}
All rules of $RIF_p$ preserve truth in both top-down and bottom-up directions.
\end{lemma}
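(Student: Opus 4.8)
The plan is to prove Lemma~\ref{niu} rule by rule: for each of the four inference rules (including the left/right variants), we show that for every interpretation $^{\ast}$, the premise is true under $^{\ast}$ if and only if the conclusion is true under $^{\ast}$. Since truth is defined via the quantifier prefix $\mathcal{Q}_1f_1\ldots\mathcal{Q}_nf_n$ over metaselections, the natural strategy is to establish the stronger claim that, for a suitable correspondence between metaselections for the premise and metaselections for the conclusion, metatruth is preserved; the equivalence of truth then follows by pushing this through the quantifier prefix. Lemma~\ref{yan2} will be the main workhorse here, since it lets us ``peel off'' a single key oconnective $\odot_{k^i}$ and reduce metatruth of $\Phi\{\mathcal{A}\odot_{k^i}\mathcal{B}\}$ to a statement about $f_i(k)$ together with metatruth of $\Phi\{\mathcal{A}\}$ or $\Phi\{\mathcal{B}\}$.

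First I would handle Rule~I. Here the premise $\Phi\{\Psi\{\mathcal{A}\odot_{k^i}\mathcal{B}\}\odot_{k^i}\mathcal{C}\}$ and conclusion $\Phi\{\Psi\{\mathcal{A}\}\odot_{k^i}\mathcal{C}\}$ involve the \emph{same} cluster $k$ (hence the same $i$-metaselection value $f_i(k)$) for both key oconnectives. Fixing $^{\ast}$ and $\overrightarrow{f}$, I apply Lemma~\ref{yan2} to the outer $\odot_{k^i}$: if $f_i(k)=\mathit{right}$, both premise and conclusion reduce to metatruth of $\Phi\{\mathcal{C}\}$; if $f_i(k)=\mathit{left}$, the premise reduces to metatruth of $\Phi\{\Psi\{\mathcal{A}\odot_{k^i}\mathcal{B}\}\}$ and the conclusion to metatruth of $\Phi\{\Psi\{\mathcal{A}\}\}$, and a second application of Lemma~\ref{yan2} (to the inner $\odot_{k^i}$ inside $\Psi$, with the same value $f_i(k)=\mathit{left}$) shows both equal metatruth of $\Phi\{\Psi\{\mathcal{A}\}\}$. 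So metatruth coincides for \emph{every} $\overrightarrow{f}$, and truth is preserved trivially. Rule~IV is similarly local: it merely renames the cluster index $l^j$ to a fresh $r^j$ throughout $\mathcal{B}$, and since $l$ occurs only in $\mathcal{A},\mathcal{B}$ and $i<j$, for any metaselection for the conclusion one builds a metaselection for the premise agreeing on all ranks but letting $f_j$ take the old value of $l$ at the new index $r$ (and vice versa); metatruth is unaffected because within the scope of $\odot_{k^i}$ only one of $\mathcal{A}$ or $\mathcal{B}$ is ever the resolvent, so the two indices are never ``seen'' simultaneously.

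The substantive cases are Rules~II and~III, which are the genuine distributivity-type rules and involve \emph{different} clusters/ranks $k^i$ and $l^j$ with $i\le j$ (strict or not), plus the splitting of the shared subcirquent $\mathcal{C}$ into the relabelled copies $\mathcal{C}_1,\mathcal{C}_2$ (resp. the distinct subcirquents $\mathcal{C},\mathcal{D}$). Here the key point is that, given a metaselection $\overrightarrow{f}$ for one cirquent, we must construct an appropriate metaselection $\overrightarrow{g}$ for the other so that metatruth is preserved, and then argue that the quantifier prefix $\mathcal{Q}_if_i$ (which is $\forall$ if rank $i$ is conjunctive, $\exists$ if disjunctive) respects this construction. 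For Rule~III, say, whose conclusion is $\Phi\{(\mathcal{A}\odot_{k^i}\mathcal{B})\circ_{l^j}(\mathcal{C}\odot_{k^i}\mathcal{D})\}$: applying Lemma~\ref{yan2} twice (once for $\circ_{l^j}$, once for the surviving $\odot_{k^i}$), metatruth of the conclusion reduces to a Boolean combination of metatruth of $\Phi\{\mathcal{A}\},\Phi\{\mathcal{B}\},\Phi\{\mathcal{C}\},\Phi\{\mathcal{D}\}$ governed by $f_j(l)$ and $f_i(k)$; the premise $\Phi\{(\mathcal{A}\circ_{m^j}\mathcal{C})\odot_{k^i}(\mathcal{B}\circ_{n^j}\mathcal{D})\}$ reduces, via $f_i(k)$ then $f_j(m)$ or $f_j(n)$, to the same Boolean combination — provided $m=n=l$ when $l$ is non-singleton (so there is a single shared value), or, when $l$ is a singleton, we are free to set $g_j(m)$ and $g_j(n)$ independently to whatever $f_j(l)$ dictates on each side. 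The condition $i\le j$ is what guarantees the $\mathcal{Q}_i$ quantifier ``comes first'' so that, after fixing $f_i$, the choice of $f_j$ can be made per-branch, matching exactly the freedom encoded by the singleton/non-singleton distinction. I expect the main obstacle to be exactly this bookkeeping in Rule~II (and III): carefully verifying that the per-branch choice of the $j$-th metaselection on the premise side can be assembled into a \emph{single} $j$-metaselection, using the hypothesis that the rank constraint $i\le j$ (and, for Rule~II, clause~(iii), that ranks in $\mathcal{C}$ are all $\ge i$) makes the relevant $\mathcal{Q}_jg_j$ quantifier subordinate to $\mathcal{Q}_ig_i$ — and dually in the bottom-up direction. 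Once the metaselection correspondences are pinned down, preservation of truth follows by a routine induction on the quantifier prefix, commuting each $\mathcal{Q}_i$ past the bijection we have built at rank $i$.
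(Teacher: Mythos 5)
Your proposal follows essentially the same route as the paper's proof: a rule-by-rule argument that uses Lemma~\ref{yan2} to peel off the key oconnectives and then builds, for each rule, an explicit correspondence between metaselections for premise and conclusion (per-branch at rank $j$, legitimized by $i\le j$ and the singleton/non-singleton and clause-(iii) side conditions), pushed through the quantifier prefix. The only point the paper makes explicit that you gloss over is that in Rule~I the premise may have fewer ranks than the conclusion (ranks occurring only in $\mathcal{B}$ disappear), so the two prefixes differ in length; but your observation that metatruth never depends on $\mathcal{B}$ is exactly what renders those extra quantifiers vacuous, so the argument goes through as you outline.
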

\begin{proof}
Pick an arbitrary interpretation $^*$.\vspace{2mm}

{\em Rule I}: Here we only consider Rule I (left),  with Rule I (right) being similar. We want to show that the premise $\Phi\bigl\{\Psi\{\mathcal{A}\}\odot_{k^{i}}\mathcal{C}\bigr\}$ is true under $^{*}$ iff so is the conclusion $\Phi\bigl\{\Psi\{\mathcal{A}\odot_{k^{i}}\mathcal{B}\}\odot_{k^{i}}\mathcal{C}\bigr\}$.

(i) Suppose that the conclusion $\Phi\bigl\{\Psi\{\mathcal{A}\odot_{k^{i}}\mathcal{B}\}\odot_{k^{i}}\mathcal{C}\bigr\}$ is true under $^{*}$. Let $1,\ldots,n$ be the linear order of all ranks of the conclusion, and $n_1,\ldots,n_m$ ($m\leq n$) that of the premise. Obviously, $n_1,\ldots,n_m$ is a subsequence of $1,\ldots,n$.

For any $j$ ($1\leq j\leq m$), let $\mathcal{O}_{n_j}g_{n_j}$ abbreviate the phrase {\scriptsize $(1)$} ``for every $n_j$-metaselection $g_{n_j}$" if the $n_j$th rank of the premise is conjunctive, and the phrase {\scriptsize $(2)$} ``let $g_{n_j}$ be the $n_j$-metaselection $f_{n_j}$ which is the one in phrase {\scriptsize $(4)$}" if the $n_j$th rank of the premise is disjunctive. Let $\mathcal{P}_{n_j}f_{n_j}$ abbreviate the phrase {\scriptsize $(3)$} `` for the $n_j$-metaselection $f_{n_j}=g_{n_j}$ where $g_{n_j}$ comes from phrase {\scriptsize $(1)$}" if the $n_j$th rank of the conclusion is conjunctive, and the phrase {\scriptsize $(4)$} ``there is a $n_j$-metaselection $f_{n_j}$" if the $n_j$th rank of the conclusion is disjunctive.

Then, by the definition of truth (of the conclusion), we have that $\mathcal{Q}_1f_1\ldots\mathcal{Q}_nf_n$ such that the conclusion $\Phi\bigl\{\Psi\{\mathcal{A}\odot_{k^{i}}\mathcal{B}\}\odot_{k^{i}}\mathcal{C}\bigr\}$ is metatrue w.r.t. $(^{*},(f_1,\ldots,f_n))$, where, for any $r\in\{1,\ldots,n\}$, $\mathcal{Q}_rf_r$ satisfies the conditions that: (a) when $r=n_j$ for some $j\in\{1,\ldots,m\}$, $\mathcal{Q}_rf_r=\mathcal{P}_rf_r$; (b) otherwise $\mathcal{Q}_rf_r$ abbreviates the phrase ``for every $r$-metaselection $f_r$" if the $r$th rank of the conclusion is conjunctive, and ``there is a $r$-metaselection $f_r$" if the $r$th rank of the conclusion is disjunctive.

By lemma \ref{yan2} (applied twice), $\Phi\bigl\{\Psi\{\mathcal{A}\odot_{k^{i}}\mathcal{B}\}\odot_{k^{i}}\mathcal{C}\bigr\}$ is metatrue w.r.t. $(^{\ast},(f_1,\ldots,f_n))$ iff $f_i(k)=\mbox{\em left}$ (resp. $f_i(k)=\mbox{\em right}$)  and
$\Phi\bigl\{\Psi\{\mathcal{A}\}\bigr\}$ (resp. $\Phi\{\mathcal{C}\}$) is metatrue w.r.t. $(^{\ast},(f_1,\ldots,f_n))$. This, in turn, is the case iff $g_i(k)=\mbox{\em left}$ (resp. $g_i(k)=\mbox{\em right}$)  and
$\Phi\bigl\{\Psi\{\mathcal{A}\}\bigr\}$ (resp. $\Phi\{\mathcal{C}\}$) is metatrue w.r.t. $(^{\ast},(g_{n_1},\ldots,g_{n_m}))$, which, in turn, is the case iff the premise $\Phi\bigl\{\Psi\{\mathcal{A}\}\odot_{k^{i}}\mathcal{C}\bigr\}$ is metatrue w.r.t. $(^{\ast},(g_{n_1},\ldots,g_{n_m}))$. Finally, we have $\mathcal{O}_{n_1}g_{n_1}\ldots\mathcal{O}_{n_m}g_{n_m}$ such that the premise $\Phi\bigl\{\Psi\{\mathcal{A}\}\odot_{k^{i}}\mathcal{C}\bigr\}$ is metatrue w.r.t. $(^{\ast},(g_{n_1},\ldots,g_{n_m}))$, which means the premise is true under $^{*}$.

(ii) Now suppose that the premise $\Phi\bigl\{\Psi\{\mathcal{A}\}\odot_{k^{i}}\mathcal{C}\bigr\}$ is true under $^{*}$. Let $1,\ldots,m$ be the linear order of all ranks of the premise, and $1,\ldots,n$ ($n\geq m$) that of the conclusion.

For any $j$ ($1\leq j\leq m$), let $\mathcal{P}_jf_j$ abbreviate the phrase {\scriptsize $(1)$} ``for every $j$-metaselection $f_j$" if the $j$th rank of the conclusion is conjunctive, and the phrase {\scriptsize $(2)$} ``let $f_j$ be the $j$-metaselection $g_j$ which is the one in phrase {\scriptsize $(4)$}" if the $j$th rank of the conclusion is disjunctive. Let $\mathcal{O}_jg_j$ abbreviate the phrase {\scriptsize $(3)$} ``for the $j$-metaselection $g_j=f_j$ where $f_j$ comes from phrase {\scriptsize $(1)$}" if the $j$th rank of the premise is conjunctive, and the phrase {\scriptsize $(4)$} ``there is a $j$-metaselection $g_j$" if the $j$th rank of the premise is disjunctive.

Then, by the definition of truth (of the premise), we have that $\mathcal{O}_1g_1\ldots\mathcal{O}_mg_m$ such that the premise $\Phi\bigl\{\Psi\{\mathcal{A}\}\odot_{k^{i}}\mathcal{C}\bigr\}$ is metatrue w.r.t. $(^{*},(g_1,\ldots,g_m))$. By lemma \ref{yan2}, $\Phi\bigl\{\Psi\{\mathcal{A}\}\odot_{k^{i}}\mathcal{C}\bigr\}$ is metatrue w.r.t. $(^{*},(g_1,\ldots,g_m))$ iff
$g_i(k)=\mbox{\em left}$ (resp. $g_i(k)=\mbox{\em right}$) and
$\Phi\bigl\{\Psi\{\mathcal{A}\}\bigr\}$ (resp. $\Phi\{\mathcal{C}\}$) is metatrue w.r.t. $(^{\ast},(g_1,\ldots,g_m))$. This, in turn, is the case iff $f_i(k)=\mbox{\em left}$ (resp. $f_i(k)=\mbox{\em right}$) and
$\Phi\bigl\{\Psi\{\mathcal{A}\}\bigr\}$ (resp. $\Phi\{\mathcal{C}\}$) is metatrue w.r.t. $(^{\ast},(f_1,\ldots,f_m))$. But the ranks $m+1,\ldots,n$ only occurs in the subcirquent $\mathcal{B}$ of the conclusion. So, $\Phi\bigl\{\Psi\{\mathcal{A}\}\bigr\}$ (resp. $\Phi\{\mathcal{C}\}$) is metatrue w.r.t. $(^{\ast},(f_1,\ldots,f_m))$ iff $\Phi\bigl\{\Psi\{\mathcal{A}\}\bigr\}$ (resp. $\Phi\{\mathcal{C}\}$) is metatrue w.r.t. $(^{\ast},(f_1,\ldots,f_m,f_{m+1},\ldots,f_n))$ for any sequence of metaselections $f_{m+1},\ldots,f_n$. Finally, by lemma \ref{yan2} (applied twice), we get that the premise $\Phi\bigl\{\Psi\{\mathcal{A}\}\odot_{k^{i}}\mathcal{C}\bigr\}$ is metatrue w.r.t. $(^{*},(g_1,\ldots,g_m))$  iff the conclusion $\Phi\bigl\{\Psi\{\mathcal{A}\odot_{k^{i}}\mathcal{B}\}\odot_{k^{i}}\mathcal{C}\bigr\}$ is metatrue w.r.t. $(^{\ast},(f_1,\ldots,f_n))$.

Let $\mathcal{Q}_rf_r$ ($1\leq r\leq n$) be the abbreviation satisfying the conditions that: (a) when $1\leq r\leq m$, $\mathcal{Q}_rf_r=\mathcal{P}_rf_r$; (b) when $m+1\leq r\leq n$, $\mathcal{Q}_rf_r$ abbreviates the phrase ``for every $r$-metaselection $f_r$" if the $r$th rank of the conclusion is conjunctive, and ``Let $f_r$ be any $r$-metaselection" if the $r$th rank of the conclusion is disjunctive. Then we have $\mathcal{Q}_1f_1\ldots\mathcal{Q}_nf_n$ such that the conclusion $\Phi\bigl\{\Psi\{\mathcal{A}\odot_{k^{i}}\mathcal{B}\}\odot_{k^{i}}\mathcal{C}\bigr\}$ is metatrue w.r.t. $(^{\ast},(f_1,\ldots,f_n))$. Hence, the conclusion is true under $^{*}$.\vspace{2mm}

{\em Rule II}: Again, we only consider Rule II (left),  with Rule II (right) being similar. We want to show that the premise $\Phi\bigl\{(\mathcal{A}\circ_{m^{j}}\mathcal{C}_1)\odot_{k^{i}}(\mathcal{B}\circ_{n^{j}}\mathcal{C}_2)\bigr\}$ is true under $^{*}$ iff so is the conclusion $\Phi\bigl\{(\mathcal{A}\odot_{k^{i}}\mathcal{B})\circ_{l^{j}}\mathcal{C}\bigr\}$, where conditions (i)--(iv) are satisfied as described in the preceding introduction of this rule.

Obviously, the premise and the conclusion have the same (number of) ranks. Suppose that both of them have $n$ ranks $1,\ldots,n$.
When $i<j$, there are four situations of the values of $\odot$ and $\circ$, i.e. $\odot$ and $\circ$ are either $\wedge$ and $\vee$, or $\vee$ and $\wedge$, or $\wedge$ and $\wedge$, or $\vee$ and $\vee$, respectively. When $i=j$, $\odot$ and $\circ$ should be the same type, in which case the proposition can be similarly proven as the last two situations when $i<j$. Here we only consider one situation that $\odot$ and $\circ$ are $\wedge$ and $\vee$, respectively, when $i<j$, with the other situations being similar. Two (sub)cases need to be further considered.

{\em Case (a)}:  cluster $l$ is a singleton in the conclusion (which means clusters $m,n$ are singletons in the premise). Let $\{l_1,\ldots,l_n\}$ be the collection of all singleton clusters in the subcirquent $\mathcal{C}$ of the conclusion.
And let $\{l'_1,\ldots,l'_n\}$ (resp. $\{l''_1,\ldots,l''_n\}$) be the collection of all singleton clusters in the subcirquent $\mathcal{C}_1$ (resp. $\mathcal{C}_2$) in the premise satisfying the condition that, for any $h\in\{1,\ldots,n\}$, $l'_h$ (resp. $l''_h$) occurs in $\mathcal{C}_1$ (resp. $\mathcal{C}_2$) at the same place as $l_h$ occurs in the $\mathcal{C}$ part of the conclusion.

Suppose that the premise $\Phi\bigl\{(\mathcal{A}\vee_{m^{j}}\mathcal{C}_1)\wedge_{k^{i}}(\mathcal{B}\vee_{n^{j}}\mathcal{C}_2)\bigr\}$ is true under $^{*}$.
For any $r\in\{1,\ldots,n\}$, let $\mathcal{Q}_rf_r$ abbreviate the phrase {\scriptsize $(1)$} ``for every $r$-metaselection $f_r$" if the $r$th rank of the conclusion is conjunctive, and the phrase {\scriptsize $(2)$} ``let $f_r$ be a $r$-metaselection satisfying the condition that $f_r(l)=g_r(m)$ and $f_r(l_h)=g_r(l'_h)$ for any $h\in\{1,\ldots,n\}$ when $g_i(k)=\mbox {\em left}$, $f_r(l)=g_r(n)$ and $f_r(l_h)=g_r(l''_h)$ for any $h\in\{1,\ldots,n\}$ when $g_i(k)=\mbox {\em right}$, and $f_r$ agrees with $g_r$ on all other clusters, where $g_r$ is the $r$-metaselection in phrase {\scriptsize $(5)$}" if the $r$th rank of the conclusion is disjunctive and $r\geq i$ . And let $\mathcal{Q}_rf_r$ abbreviate the phrase {\scriptsize $(3)$} ``let $f_r$ be the $r$-metaselection $g_r$, where $g_r$ is the one in phrase {\scriptsize $(5)$}" if the $r$th rank of the conclusion is disjunctive and $r<i$. Let $\mathcal{P}_rg_r$ abbreviate the phrase {\scriptsize $(4)$} ``for the $r$-metaselection $g_r$ satisfying the condition that $g_r(l'_h)=g_r(l''_h)=f_r(l_h)$ for any $h\in\{1,\ldots,n\}$ and $g_r$ agrees with $f_r$ on all other clusters, where $f_r$ comes from phrase {\scriptsize $(1)$}" if the $r$th rank of the premise is conjunctive, and the phrase {\scriptsize $(5)$} ``there is a $r$-metaselection $g_r$" if the $r$th rank of the premise is disjunctive.

Then, by the definition of truth (of the premise), we have $\mathcal{P}_1g_1\ldots\mathcal{P}_ng_n$ such that the premise $\Phi\bigl\{(\mathcal{A}\vee_{m^{j}}\mathcal{C}_1)\wedge_{k^{i}}(\mathcal{B}\vee_{n^{j}}\mathcal{C}_2)\bigr\}$ is  metatrue w.r.t. $(^{\ast},(g_1,\ldots,g_n))$. But by lemma \ref{yan2}, the premise $\Phi\bigl\{(\mathcal{A}\vee_{m^{j}}\mathcal{C}_1)\wedge_{k^{i}}(\mathcal{B}\vee_{n^{j}}\mathcal{C}_2)\bigr\}$ is  metatrue w.r.t. $(^{\ast},(g_1,\ldots,g_n))$ iff  $g_i(k)=\mbox{\em left}$ (resp. $g_i(k)=\mbox{\em right}$)  and
$\Phi\{\mathcal{A}\vee_{m^{j}}\mathcal{C}_1\}$ (resp. $\Phi\{\mathcal{B}\vee_{n^{j}}\mathcal{C}_2\}$) is metatrue w.r.t. $(^{\ast},(g_1,\ldots,g_n))$. This, in turn, is the case iff  $f_i(k)=\mbox{\em left}$ (resp. $f_i(k)=\mbox{\em right}$)  and
$\Phi\{\mathcal{A}\vee_{l^{j}}\mathcal{C}\}$ (resp. $\Phi\{\mathcal{B}\vee_{l^{j}}\mathcal{C}\}$) is metatrue w.r.t. $(^{\ast},(f_1,\ldots,f_n))$.\footnote{Note that all the ranks $t$
that occurs in the subcirquent $\mathcal{C}$ (resp. $\mathcal{C}_1$ or $\mathcal{C}_2$) of the conclusion (resp. the premise) satisfy the condition that $t\geq i$.} The above, in turn, is the case iff the conclusion
$\Phi\bigl\{(\mathcal{A}\wedge_{k^{i}}\mathcal{B})\vee_{l^{j}}\mathcal{C}\bigr\}$ is metatrue w.r.t. $(^{\ast},(f_1,\ldots,f_n))$. Finally, we have $\mathcal{Q}_1f_1\ldots\mathcal{Q}_nf_n$ such that the conclusion is metatrue w.r.t. $(^{\ast},(f_1,\ldots,f_n))$. Hence, the conclusion is true under $^{*}$.

Now suppose that the conclusion $\Phi\bigl\{(\mathcal{A}\wedge_{k^{i}}\mathcal{B})\vee_{l^{j}}\mathcal{C}\bigr\}$ is true under $^{*}$. For any $r\in\{1,\ldots,n\}$, let $\mathcal{Q}_rg_r$ abbreviate the phrase {\scriptsize $(1)$} ``for every $r$-metaselection $g_r$" if the $r$th rank of the premise is conjunctive, and the phrase {\scriptsize $(2)$} ``let $g_r$ be a $r$-metaselection satisfying the condition that $g_r(m)=g_r(n)=f_r(l)$, $g_r(l'_h)=g_r(l''_h)=f_r(l_h)$ for any $h\in\{1,\ldots,n\}$, and $g_r$ agrees with $f_r$ on all other clusters, where $f_r$ is the $r$-metaselection in phrase {\scriptsize $(5)$}" if the $r$th rank of the premise is disjunctive. Let $\mathcal{P}_rf_r$ abbreviate the phrase {\scriptsize $(3)$} ``for the $r$-metaselection $f_r=g_r$, where $g_r$ comes from phrase {\scriptsize $(1)$}" if the $r$th rank of the conclusion is conjunctive and $r<i$, and the phrase {\scriptsize $(4)$} ``for the $r$-metaselection $f_r$ satisfying the condition that $f_r(l_h)=g_r(l'_h)$ for any $h\in\{1,\ldots,n\}$ when $g_i(k)=\mbox {\em left}$, $f_r(l_h)=g_r(l''_h)$ for any $h\in\{1,\ldots,n\}$  when $g_i(k)=\mbox {\em right}$, and $f_r$ agrees with $g_r$ on all other clusters, where $g_r$ comes from phrase {\scriptsize $(1)$}" if the $r$th rank of the conclusion is conjunctive and $r\geq i$. And let $\mathcal{P}_rf_r$ abbreviate the phrase {\scriptsize $(5)$} ``there is a $r$-metaselection $f_r$" if the $r$th rank of the conclusion is disjunctive.

Then, by the definition of truth (of the conclusion), we have $\mathcal{P}_1f_1\ldots\mathcal{P}_nf_n$ such that the conclusion $\Phi\bigl\{(\mathcal{A}\wedge_{k^{i}}\mathcal{B})\vee_{l^{j}}\mathcal{C}\bigr\}$ is metatrue w.r.t. $(^{\ast},(f_1,\ldots,f_n))$. By lemma \ref{yan2}, the conclusion $\Phi\bigl\{(\mathcal{A}\wedge_{k^{i}}\mathcal{B})\vee_{l^{j}}\mathcal{C}\bigr\}$ is metatrue w.r.t. $(^{\ast},(f_1,\ldots,f_n))$ iff $f_i(k)=\mbox{\em left}$ (resp. $f_i(k)=\mbox{\em right}$)  and
$\Phi\{\mathcal{A}\vee_{l^{j}}\mathcal{C}\}$ (resp. $\Phi\{\mathcal{B}\vee_{l^{j}}\mathcal{C}\}$) is metatrue w.r.t. $(^{\ast},(f_1,\ldots,f_n))$. This, in turn, is the case iff $g_i(k)=\mbox{\em left}$ (resp. $g_i(k)=\mbox{\em right}$)  and
$\Phi\{\mathcal{A}\vee_{m^{j}}\mathcal{C}_1\}$ (resp. $\Phi\{\mathcal{B}\vee_{n^{j}}\mathcal{C}_2\}$) is metatrue w.r.t. $(^{\ast},(g_1,\ldots,g_n))$.\footnote{The same to the preceding note.} The above, in turn, is the case iff the premise $\Phi\bigl\{(\mathcal{A}\vee_{m^{j}}\mathcal{C}_1)\wedge_{k^{i}}(\mathcal{B}\vee_{n^{j}}\mathcal{C}_2)\bigr\}$ is metatrue w.r.t. $(^{\ast},(g_1,\ldots,g_n))$. Finally, we have $\mathcal{Q}_1g_1\ldots\mathcal{Q}_ng_n$ such that the premise is metatrue w.r.t. $(^{\ast},(g_1,\ldots,g_n))$. Hence, the premise is true under $^{*}$.

{\em Case (b)}: cluster $l$ is a non-singleton in the conclusion (which means $m=n=l$). By employing an essentially similar (but simpler) argument as we did in {\em Case (a)}, this case can be easily proven, so we leave details to the reader.

\vspace{2mm}
{\em Rule III}: We want to show that the premise $\Phi\bigl\{(\mathcal{A}\circ_{m^{j}}\mathcal{C})\odot_{k^{i}}(\mathcal{B}\circ_{n^{j}}\mathcal{D})\bigr\}$ is true under $^{*}$ iff so is the conclusion  $\Phi\bigl\{(\mathcal{A}\odot_{k^{i}}\mathcal{B})\circ_{l^{j}}(\mathcal{C}\odot_{k^{i}}\mathcal{D})\bigr\}$, where conditions (i),(ii) are satisfied as described in the preceding introduction of this rule.
Suppose that both the premise and the conclusion have (the same) $n$ ranks $1,\ldots,n$. Again, we only consider one situation that both $\odot$ and $\circ$ are $\vee$ and $i<j$, with the other situations being similar. Two (sub)cases are further considered below.

{\em Case (a)}: cluster $l$ is a singleton in the conclusion (which means clusters $m,n$ are singletons in the premise).

Suppose that the premise $\Phi\bigl\{(\mathcal{A}\vee_{m^{j}}\mathcal{C})\vee_{k^{i}}(\mathcal{B}\vee_{n^{j}}\mathcal{D})\bigr\}$ is true under $^{*}$.
For any $r\in\{1,\ldots,n\}$, let $\mathcal{Q}_rf_r$ abbreviate the phrase {\scriptsize (1)} ``for every $r$-metaselection $f_r$" if the $r$th rank of the conclusion is conjunctive, and the phrase {\scriptsize (2)} ``let $f_r$ be the $r$-metaselection $g_r$, where $g_r$ is the one in phrase {\scriptsize (5)}" if the $r$th rank of the conclusion is disjunctive and $r\leq i$, and the phrase {\scriptsize (3)} ``let $f_r$ be a $r$-metaselection satisfying the condition that $f_r(l)=g_r(m)$ when $g_i(k)=\mbox{\em left}$, $f_r(l)=g_r(n)$ when $g_i(k)=\mbox{\em right}$, and $f_r$ agrees with $g_r$ on all other clusters, where $g_r$ is the $r$-metaselection in phrase {\scriptsize (5)}" if the $r$th rank of the conclusion is disjunctive and $r>i$.
Let $\mathcal{P}_rg_r$ abbreviate the phrase {\scriptsize (4)} ``for the $r$-metaselection $g_r=f_r$, where $f_r$ comes from phrase {\scriptsize (1)}" if the $r$th rank of the premise is conjunctive, and the phrase {\scriptsize (5)} ``there is a $r$-metaselection $g_r$" if the $r$th rank of the premise is disjunctive.

Then, by the definition of truth (of the premise), we have that $\mathcal{P}_1g_1\ldots\mathcal{P}_ng_n$ such that the premise
$\Phi\bigl\{(\mathcal{A}\vee_{m^{j}}\mathcal{C})\vee_{k^{i}}(\mathcal{B}\vee_{n^{j}}\mathcal{D})\bigr\}$ is metatrue w.r.t. $(^{\ast},(g_1,\ldots,g_n))$. By lemma \ref{yan2}, the premise $\Phi\bigl\{(\mathcal{A}\vee_{m^{j}}\mathcal{C})\vee_{k^{i}}(\mathcal{B}\vee_{n^{j}}\mathcal{D})\bigr\}$ is metatrue w.r.t. $(^{\ast},(g_1,\ldots,g_n))$ iff $g_i(k)=\mbox{\em left}$ (resp. $g_i(k)=\mbox{\em right}$)  and
$\Phi\bigl\{\mathcal{A}\vee_{m^{j}}\mathcal{C}\bigr\}$ (resp.  $\Phi\bigl\{\mathcal{B}\vee_{n^{j}}\mathcal{D}\bigr\}$) is metatrue w.r.t. $(^{\ast},(g_1,\ldots,g_n))$. This, in turn, is the case iff  $f_i(k)=\mbox{\em left}$ (resp. $f_i(k)=\mbox{\em right}$)  and
$\Phi\bigl\{\mathcal{A}\vee_{l^{j}}\mathcal{C}\bigr\}$ (resp.  $\Phi\bigl\{\mathcal{B}\vee_{l^{j}}\mathcal{D}\bigr\}$) is metatrue w.r.t. $(^{\ast},(f_1,\ldots,f_n))$. Further, by lemma \ref{yan2} (applied twice), the above is the case iff the conclusion $\Phi\bigl\{(\mathcal{A}\vee_{k^{i}}\mathcal{B})\vee_{l^{j}}(\mathcal{C}\vee_{k^{i}}\mathcal{D})\bigr\}$ is metatrue w.r.t. $(^{\ast},(f_1,\ldots,f_n))$. Finally, we have that $\mathcal{Q}_1f_1\ldots\mathcal{Q}_nf_n$ such that the conclusion is metatrue w.r.t. $(^{\ast},(f_1,\ldots,f_n))$. Hence the conclusion is true under $^{*}$.

Now suppose that the conclusion  $\Phi\bigl\{(\mathcal{A}\vee_{k^{i}}\mathcal{B})\vee_{l^{j}}(\mathcal{C}\vee_{k^{i}}\mathcal{D})\bigr\}$ is true under $^{*}$. For any $r\in\{1,\ldots,n\}$, let $\mathcal{Q}_rg_r$ abbreviate the phrase {\scriptsize (1)} ``for every $r$-metaselection $g_r$" if the $r$th rank of the premise is conjunctive, and the phrase {\scriptsize (2)} ``let $g_r$ be the $r$-metaselection $f_r$, where $f_r$ is the one in phrase {\scriptsize (5)}" if the $r$th rank of the premise is disjunctive and $r\leq i$, and the phrase {\scriptsize (3)} ``let $g_r$ be a $r$-metaselection satisfying the condition that $g_r(m)=g_r(n)=f_r(l)$ and $g_r$ agrees with $f_r$ on all other clusters, where $f_r$ is the $r$-metaselection in phrase {\scriptsize (5)}" if the $r$th rank of the premise is disjunctive and $r>i$.
Let $\mathcal{P}_rf_r$ abbreviate the phrase {\scriptsize (4)} ``for the $r$-metaselection $f_r=g_r$, where $g_r$ comes from phrase {\scriptsize (1)}" if the $r$th rank of the conclusion is conjunctive, and the phrase {\scriptsize (5)} ``there is a $r$-metaselection $f_r$" if the $r$th rank of the conclusion is disjunctive.

Then, by the definition of truth (of the conclusion), we have $\mathcal{P}_1f_1\ldots\mathcal{P}_nf_n$ such that the conclusion $\Phi\bigl\{(\mathcal{A}\vee_{k^{i}}\mathcal{B})\vee_{l^{j}}(\mathcal{C}\vee_{k^{i}}\mathcal{D})\bigr\}$ is metatrue w.r.t. $(^{\ast},(f_1,\ldots,f_n))$. By lemma \ref{yan2} (applied twice), the conclusion $\Phi\bigl\{(\mathcal{A}\vee_{k^{i}}\mathcal{B})\vee_{l^{j}}(\mathcal{C}\vee_{k^{i}}\mathcal{D})\bigr\}$ is metatrue w.r.t. $(^{\ast},(f_1,\ldots,f_n))$ iff $f_i(k)=\mbox{\em left}$ (resp. $f_i(k)=\mbox{\em right}$)  and
$\Phi\bigl\{\mathcal{A}\vee_{l^{j}}\mathcal{C}\bigr\}$ (resp.  $\Phi\bigl\{\mathcal{B}\vee_{l^{j}}\mathcal{D}\bigr\}$) is metatrue w.r.t. $(^{\ast},(f_1,\ldots,f_n))$. This, in turn, is the case iff $g_i(k)=\mbox{\em left}$ (resp. $g_i(k)=\mbox{\em right}$)  and
$\Phi\bigl\{\mathcal{A}\vee_{m^{j}}\mathcal{C}\bigr\}$ (resp.  $\Phi\bigl\{\mathcal{B}\vee_{n^{j}}\mathcal{D}\bigr\}$) is metatrue w.r.t. $(^{\ast},(g_1,\ldots,g_n))$. The above, in turn, is the case iff the premise $\Phi\bigl\{(\mathcal{A}\vee_{m^{j}}\mathcal{C})\vee_{k^{i}}(\mathcal{B}\vee_{n^{j}}\mathcal{D})\bigr\}$ is metatrue w.r.t. $(^{\ast},(g_1,\ldots,g_n))$. Hence we have $\mathcal{Q}_1g_1\ldots\mathcal{Q}_ng_n$ such that the premise
$\Phi\bigl\{(\mathcal{A}\vee_{m^{j}}\mathcal{C})\vee_{k^{i}}(\mathcal{B}\vee_{n^{j}}\mathcal{D})\bigr\}$ is metatrue w.r.t. $(^{\ast},(g_1,\ldots,g_n))$, and hence the premise is true under  $^{*}$.

{\em Case (b)}: cluster $l$ is a non-singleton in the conclusion. This case can be proven in a similar (but simpler) way as we did in {\em Case (a)}, whose verification is left to the reader.

\vspace{2mm}
{\em Rule IV}: We want to show that, when $i<j$, the premise $\Phi\bigl\{\mathcal{A}[l^{j}]\odot_{k^{i}}\mathcal{B}[l^{j}/r^{j}]\bigr\}$
is true under $^{*}$ iff so is the conclusion $\Phi\bigl\{\mathcal{A}[l^{j}]\odot_{k^{i}}\mathcal{B}[l^{j}]\bigr\}$, where conditions (i),(ii) are satisfied as described in the preceding introduction of this rule.
Here we only consider the case when rank $i$ is conjunctive and rank $j$ is disjunctive, with the other cases being similar.
Obviously, the premise and the conclusion has the same (number of) ranks. Suppose that both of them have $n$ ranks $1,\ldots,n$.

Suppose that the premise $\Phi\bigl\{\mathcal{A}[l^{j}]\wedge_{k^{i}}\mathcal{B}[l^{j}/r^{j}]\bigr\}$ is true under $^{*}$. For any $t\in\{1,\ldots,n\}$, let $\mathcal{Q}_tf_t$ abbreviate the phrase {\scriptsize $(1)$} ``for every $t$-metaselection $f_t$" if the $t$th rank of the conclusion is conjunctive, and the phrase {\scriptsize $(2)$} ``let $f_t$ be the $t$-metaselection $g_t$, where $g_t$ is the one in phrase {\scriptsize $(5)$}" if the $t$th rank of the conclusion is disjunctive and $t\neq j$, and the phrase {\scriptsize $(3)$} ``let $f_t$ be the $t$-metaselection satisfying the condition that $f_t(l)=g_t(l)$ when $f_i(k)=\mbox {\em left}$, $f_t(l)=g_t(r)$ when $f_i(k)=\mbox {\em right}$ and $f_t$ agrees with $g_t$ on all other clusters, where $g_t$ is the $t$-metaselection in phrase {\scriptsize $(5)$}" if the $t$th rank of the conclusion is disjunctive and $t=j$. Let $\mathcal{P}_tg_t$ abbreviate the phrase {\scriptsize $(4)$} ``for the $t$-metaselection $g_t=f_t$ where $f_t$ comes from phrase {\scriptsize $(1)$}" if the $t$th rank of the premise is conjunctive, and the phrase {\scriptsize $(5)$} ``there is a $t$-metaselection $g_t$" if the $t$th rank of the premise is disjunctive.

Then, by the definition of truth (of the premise), we have $\mathcal{P}_1g_1\ldots\mathcal{P}_ng_n$ such that the premise $\Phi\bigl\{\mathcal{A}[l^{j}]\wedge_{k^{i}}\mathcal{B}[l^{j}/r^{j}]\bigr\}$ is metatrue w.r.t. $(^{*},(g_1,\ldots,g_n))$. By lemma \ref{yan2}, the premise $\Phi\bigl\{\mathcal{A}[l^{j}]\wedge_{k^{i}}\mathcal{B}[l^{j}/r^{j}]\bigr\}$ is metatrue w.r.t. $(^{*},(g_1,\ldots,g_n))$ iff  $g_i(k)=\mbox{\em left}$ (resp. $g_i(k)=\mbox{\em right}$)  and $\Phi\bigl\{\mathcal{A}[l^{j}]\bigr\}$ (resp. $\Phi\bigl\{\mathcal{B}[l^{j}/r^{j}]\bigr\}$) is metatrue w.r.t. $(^{\ast},(g_1,\ldots,g_n))$. This, in turn, is the case iff  $f_i(k)=\mbox{\em left}$ (resp. $f_i(k)=\mbox{\em right}$)  and $\Phi\bigl\{\mathcal{A}[l^{j}]\bigr\}$ (resp. $\Phi\bigl\{\mathcal{B}[l^{j}]\bigr\}$) is metatrue w.r.t. $(^{\ast},(f_1,\ldots,f_n))$.\footnote{Note that $l$ only occurs in the subcirquents $\mathcal{A}$ and $\mathcal{B}$ in the conclusion, and that $i<j$.} The above, in turn, is the case iff the conclusion $\Phi\bigl\{\mathcal{A}[l^{j}]\wedge_{k^{i}}\mathcal{B}[l^{j}]\bigr\}$ is  metatrue w.r.t. $(^{\ast},(f_1,\ldots,f_n))$. Finally, we have $\mathcal{Q}_1f_1\ldots\mathcal{Q}_nf_n$ such that the conclusion is metatrue w.r.t. $(^{\ast},(f_1,\ldots,f_n))$. Hence, the conclusion is true under $^{*}$.

Now suppose that the conclusion $\Phi\bigl\{\mathcal{A}[l^{j}]\wedge_{k^{i}}\mathcal{B}[l^{j}]\bigr\}$ is true under $^{*}$. For any $t\in\{1,\ldots,n\}$, let $\mathcal{Q}_tg_t$ abbreviate the phrase {\scriptsize $(1)$} ``for every $t$-metaselection $g_t$" if the $t$th rank of the premise is conjunctive, and the phrase {\scriptsize $(2)$} ``let $g_t$ be the $t$-metaselection $f_t$, where $f_t$ is the one in phrase {\scriptsize $(5)$}" if the $t$th rank of the premise is disjunctive and $t\neq j$, and the phrase {\scriptsize $(3)$} ``let $g_t$ be the $t$-metaselection satisfying the condition that $g_t(r)=f_t(l)$ and $g_t$ agrees with $f_t$ on all other clusters, where $f_t$ is the $t$-metaselection in phrase {\scriptsize $(5)$}" if the $t$th rank of the premise is disjunctive and $t=j$. Let $\mathcal{P}_tf_t$ abbreviate the phrase {\scriptsize $(4)$} ``for the $t$-metaselection $f_t=g_t$ where $g_t$ comes from phrase {\scriptsize $(1)$}" if the $t$th rank of the conclusion is conjunctive, and the phrase {\scriptsize $(5)$} ``there is a $t$-metaselection $f_t$" if the $t$th rank of the conclusion is disjunctive.

Then, by the definition of truth (of the conclusion), we have $\mathcal{P}_1f_1\ldots\mathcal{P}_nf_n$ such that the conclusion $\Phi\bigl\{\mathcal{A}[l^{j}]\wedge_{k^{i}}\mathcal{B}[l^{j}]\bigr\}$ is  metatrue w.r.t. $(^{\ast},(f_1,\ldots,f_n))$. But by lemma \ref{yan2}, the conclusion $\Phi\bigl\{\mathcal{A}[l^{j}]\wedge_{k^{i}}\mathcal{B}[l^{j}]\bigr\}$ is  metatrue w.r.t. $(^{\ast},(f_1,\ldots,f_n))$ iff $f_i(k)=\mbox{\em left}$ (resp. $f_i(k)=\mbox{\em right}$)  and $\Phi\bigl\{\mathcal{A}[l^{j}]\bigr\}$ (resp. $\Phi\bigl\{\mathcal{B}[l^{j}]\bigr\}$) is metatrue w.r.t. $(^{\ast},(f_1,\ldots,f_n))$. This, in turn, is the case iff $g_i(k)=\mbox{\em left}$ (resp. $g_i(k)=\mbox{\em right}$)  and $\Phi\bigl\{\mathcal{A}[l^{j}]\bigr\}$ (resp. $\Phi\bigl\{\mathcal{B}[l^{j}/r^{j}]\bigr\}$) is metatrue w.r.t. $(^{\ast},(g_1,\ldots,g_n))$.\footnote{The same to the preceding note.} The above, in turn, is the case iff the premise $\Phi\bigl\{\mathcal{A}[l^{j}]\wedge_{k^{i}}\mathcal{B}[l^{j}/r^{j}]\bigr\}$ is metatrue w.r.t. $(^{*},(g_1,\ldots,g_n))$. Hence we have $\mathcal{Q}_1g_1\ldots\mathcal{Q}_ng_n$ such that the premise is  metatrue w.r.t. $(^{*},(g_1,\ldots,g_n))$, and hence the premise is true under $^{*}$.\end{proof}

\subsection{The soundness and completeness of $RIF_p$}
\begin{theorem}
A cirquent is valid if and only if it is provable in $RIF_p$.
\end{theorem}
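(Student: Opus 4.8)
The plan is to prove the two implications separately. The left-to-right one, \emph{provability $\Rightarrow$ validity} (soundness of $RIF_p$), is routine: I would argue by induction on the length of a proof. An axiom is a classical cirquent which, read as a formula, is a tautology, so by the earlier remark that a classical cirquent is valid precisely when it is tautological, every axiom is valid; and if $\mathcal{D}'$ follows from $\mathcal{D}$ by one of Rules~I--IV, then by Lemma~\ref{niu} the two cirquents are true under exactly the same interpretations, so validity is inherited. Hence the last cirquent of any proof is valid.

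For the right-to-left implication, \emph{validity $\Rightarrow$ provability} (completeness), I would isolate the following reduction lemma: every cirquent $\mathcal{C}$ is derivable in $RIF_p$ from some classical cirquent $\mathcal{C}^{\flat}$, i.e.\ there is a finite sequence $\mathcal{C}^{\flat}=\mathcal{D}_0,\mathcal{D}_1,\ldots,\mathcal{D}_m=\mathcal{C}$ with $\mathcal{D}_0$ classical and, for each $t$, $\mathcal{D}_t$ the conclusion and $\mathcal{D}_{t-1}$ the premise of an instance of one of Rules~I--IV. Granting this, completeness is immediate: for a valid $\mathcal{C}$, fix such a chain; applying Lemma~\ref{niu} successively along it (each rule preserving truth, hence validity, in \emph{both} directions) gives that $\mathcal{C}^{\flat}$ is valid; being classical, $\mathcal{C}^{\flat}$ is then a tautology, hence an axiom of $RIF_p$; and the chain $\mathcal{D}_0,\ldots,\mathcal{D}_m$ is by definition a proof of $\mathcal{C}$.

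I would prove the reduction lemma by a terminating backward proof search: as long as the current cirquent has a non-singleton cluster, rewrite it --- reading some rule from its conclusion to its premise --- so as to lower its ``clustering complexity''. The rules cooperate for this. If two occurrences of an oconnective $\odot_{k^{i}}$ are nested, Rule~I read backwards deletes the inner one, shrinking cluster $k$. If the occurrences of a cluster $l$ of rank $j$ lie apart in the two arguments of a lower-rank key oconnective $\odot_{k^{i}}$ with $i<j$, Rule~IV read backwards splits $l$ into two smaller clusters. If two occurrences of $\odot_{k^{i}}$ sit as the key oconnectives of a pattern $(\,\cdot\,\odot_{k^{i}}\,\cdot\,)\circ_{l^{j}}(\,\cdot\,\odot_{k^{i}}\,\cdot\,)$ with $i\le j$, Rule~III read backwards merges them into one. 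Finally, Rules~II and III in their ``distributing'' directions achieve no immediate simplification but reshape the cirquent --- pushing lower-rank oconnectives toward the root, sometimes duplicating material under a higher-rank one --- so that the scattered occurrences of a non-singleton cluster can be manoeuvred into one of the three reducible configurations above. Termination would be argued by a lexicographic measure on cirquents --- roughly (number of ranks, then the clustering excess $\sum_{k}(|k|-1)$, then the oconnective count) --- shown to decrease strictly over each bounded block of reshaping steps followed by one genuinely simplifying step.

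I expect the termination-and-scheduling analysis of this search to be the main obstacle. Rule~II in its distributing direction duplicates a subcirquent, so it can \emph{increase} both the oconnective count and the clustering excess (any non-singleton cluster inside the duplicated part is copied); hence the rules cannot be applied greedily, and one must show that every such temporary increase is paid back by a bounded number of later Rule~I, III, or IV steps, so that the chosen measure still descends block by block. One must also check that whenever the current cirquent is still non-classical some reducible configuration is reachable after a bounded amount of reshaping, and dispose of a few boundary situations --- notably an offending oconnective lying at the very root of the cirquent, handled by taking $\Phi\{\}$ to be the empty context and adjusting the bookkeeping. Once this scheduling argument is in place, the reduction lemma follows, and the theorem follows from it together with the soundness direction above.
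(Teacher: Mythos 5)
Your soundness half and your overall completeness strategy coincide with the paper's: reduce an arbitrary cirquent, by reading the rules from conclusion to premise, to a classical cirquent, then use the fact (Lemma \ref{niu}) that every rule preserves truth in both directions to conclude that the classical cirquent is a tautology and hence an axiom. But the substance of the completeness proof is exactly the ``reduction lemma'' you state and then defer, and as it stands there is a genuine gap: you do not exhibit a schedule of rule applications that provably terminates in a classical cirquent, you only describe the intended roles of the rules and acknowledge that the termination-and-scheduling analysis is the main obstacle. In particular you never engage with the side conditions of the rules, which are what make the scheduling delicate: Rule II requires $i\leq j$ \emph{and} that every rank occurring in the duplicated subcirquent $\mathcal{C}$ be $\geq i$; Rule III requires $i\leq j$; Rule IV requires $i<j$ and that cluster $l$ occur only inside the two arguments of the key oconnective. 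Because of these constraints the backward search cannot simply ``reshape until a reducible configuration appears''; the paper must first (Step 1) use Rule II to enforce the invariant that no oconnective has an ancestor of strictly higher rank (its ``property 1''), then (Step 2) use Rule I to remove same-cluster ancestor--descendant pairs (``property 2''), and only then (Step 3) process the ranks in increasing order, bringing two same-cluster occurrences to sibling position with Rule II, merging them with Rule III, and using Rule IV after each rank to split the higher-rank clusters that straddle a lower-rank oconnective --- and it must verify that properties 1 and 2 are preserved throughout, since they are what guarantee the rules remain applicable. None of this ordering or invariant-maintenance appears in your sketch, and without it the claim that ``some reducible configuration is reachable after a bounded amount of reshaping'' is precisely what needs proof.

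Your proposed termination measure is also not established and is likely too crude. Rule II read backwards duplicates $\mathcal{C}$ with its non-singleton clusters intact, so both the oconnective count and the clustering excess can grow, as you note; your remedy --- that each increase is ``paid back within a bounded block'' --- is asserted, not argued, and it is not clear the blocks are bounded. The paper instead controls this by a strategic choice (in its Step 3 it picks the pair $a,b$ of same-cluster oconnectives whose nearest common ancestor is deepest, which guarantees the duplicated subcirquent contains no further occurrence of the cluster being shrunk) and by well-founded measures that are genuinely transfinite: an $i$-distribution ordered lexicographically by level (an ordinal below $\omega^{\omega}$) for Step 1, and a four-component state below $\omega^{4}$ for Step 3. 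So the decomposition you propose is the right one, but the missing scheduling argument, the treatment of the rank side conditions, and a workable termination measure are the core of the theorem's proof rather than finishing touches.
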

\begin{proof}
The soundness part is immediate, because the axioms are obviously valid and, by Lemma \ref{niu}, all rules preserve truth and hence validity. For the completeness part, consider an arbitrary cirquent $\mathcal{C}$ and assume it is valid. We want to show that $\mathcal{C}$ is provable in $RIF_p$.\vspace{2mm}

In the context of a given cirquent $\mathcal{D}$, we define the {\bf level} of an oconnective $a$, denoted by $\mathcal{L}^{\mathcal{D}}(a)$, to be the total number of oconnectives $b$ such that $a$ is in the scope of $b$. An oconnective $b$ is a {\bf child} of an oconnective $a$ and $a$ is the {\bf parent} of $b$ when $b$ is in the scope of $a$ and $\mathcal{L}^{\mathcal{D}}(b) = \mathcal{L}^{\mathcal{D}}(a)+1$. The relations ``descendant" and ``ancestor" are the transitive closures of the relations ``child" and ``parent", respectively. The {\bf distance} between an oconnective $a$ and one of its descendants $b$ is defined to be the positive integer $k$ such that $k = \mathcal{L}^{\mathcal{D}}(b)-\mathcal{L}^{\mathcal{D}}(a)$; when the distance between $a$ and $b$ is less than the distance between $a$ and another descendant $c$ of $a$, we say that $b$ is {\bf nearer} to $a$ (or vice versa) than $c$ is. Next, for any two oconnectives $a$ and $b$, we denote their nearest common ancestor oconnective by  $\underline{ab}$.\vspace{2mm}

If our cirquent $\mathcal{C}$ is classical (i.e. every cluster of it is a singleton),  then  the validity of $\mathcal{C}$ can be seen to mean nothing but its validity in the sense of classical logic. So, in this case,  $\mathcal{C}$ is an axiom of $RIF_p$ and hence is provable.

Now, for the rest of this proof, assume that $\mathcal{C}$ is not classical and that it has $n$ ranks $1,\ldots,n$. We construct, bottom-up, a proof of $\mathcal{C}$ as follows. We will use $\mathcal{D}$ to denote the current (topmost in the so far constructed proof) cirquent and, for convenience of descriptions, when an oconnective $a$ is in (whatever) cluster $k$ which is of rank $i$, we also say that $a$ is in rank $i$.\vspace{2mm}

{\bf Step 1}: Let $i$ be a variable during this step to denote the number of iterations of the outmost loop of Step 1.
For $i=1$ to $n$, do the following: repeat the step below while there is an oconnective $c$ in $\mathcal{D}$ such that $c$ is in rank $i$ and that an ancestor of $c$ is in rank $j$ while $j>i$:
\begin{itemize}
\item Let $S$ be a collection of all oconnectives $b$ in $\mathcal{D}$ satisfying the condition that $b$ is in rank $i$ and that an ancestor of $b$ is in rank $j$ with $j>i$. Pick an oconnective $a$  in $S$ such that $\mathcal{L}^{\mathcal{D}}(a)\leq\mathcal{L}^{\mathcal{D}}(b)$ for any $b$ in $S$.
Repeat the following two steps while the parent of $a$ is in rank $j$ with $j>i$:
\begin{itemize}
\item Apply (bottom-up) Rule II to $\mathcal{D}$, with $a$ being the key oconnective of this application (in the conclusion);
\item Rename the key oconnective of this application (in the premise) into $a$.
\end{itemize}
\end{itemize}

With some thoughts, one can see that Rule II can always be applied to the current cirquent while ``the parent of $a$ is in rank $j$ with $j>i$". Note that every time Rule II is (bottom-up) applied, the level of the picked oconnective $a$ in the current cirquent is decreased by 1. Below we show that, for any fixed $i$, the $i$th loop of Step 1 terminates in finite steps.

For the current cirquent $\mathcal{D}$ at any given stage of the $i$th loop of Step 1 --- hencefore we shall use $\mathcal D$ as (also) a name of that stage ---  we define the {\bf $i$-distribution} of $\mathcal {D}$ to be an infinite sequence $(x_0,x_1,x_2,\ldots)$ where, for any $m\in\{0,1,2,\ldots\}$, $x_m$ is the total number of oconnectives $c$ such that $c$ is in rank $i$ and that $\mathcal{L}^{\mathcal{D}}(c)=m$.

Further, we define the relation ``$\leq$" on the set of all such sequences as follows. For any two sequences $(x_0,x_1,x_2,\ldots)$ and $(y_0,y_1,y_2,\ldots)$, $(x_0,x_1,x_2,\ldots)\leq(y_0,y_1,y_2,\ldots)$ if and only if one of the following conditions holds: for any $m\in\{0,1,2,\ldots\}$, $x_m=y_m$; $x_0> y_0$; $x_0=y_0$ and $x_1> y_1$; $x_0=y_0$, $x_1=y_1$ and $x_2> y_2$; $x_0=y_0$, $x_1=y_1$, $x_2=y_2$ and $x_3> y_3$; $\ldots$. It is easy to see that ``$\leq$" well-orders the set of all $i$-distributions, with each sequence ($i$-distribution) denoting an ordinal $<\omega^{\omega}$.

Then, we can see that the $i$-distribution of the current cirquent keeps strictly decreasing during the $i$th loop of Step 1, meaning that the latter terminates at some point. Hence, the above Step 1 terminates and we get a cirquent $\mathcal{C}_1$ where there is no oconnecitves $c$ such that $c$ is in rank $i$ and an ancestor of $c$ is in rank $j$ with $j>i$. We call this property that $\mathcal{C}_1$ have the {\bf property 1} for later reference. Since Rule II preserves truth in the bottom-up direction (by Lemma \ref{niu}) and $\mathcal{C}$ is valid, $\mathcal{C}_1$ is valid. Then, our construction of a proof of $\mathcal{C}$ continues upward from $\mathcal{C}_{1}$ as follows.\vspace{2mm}

{\bf Step 2}: Repeat applying (bottom-up) Rule I to $\mathcal{D}$ until no longer possible.\vspace{2mm}

Every time Rule I is applied, the current cirquent loses one pair of oconnectives $a,b$ such that $a,b$ are in the same cluster and $b$ is a descendant of $a$ (or vice versa). So, sooner or later, we get a cirquent $\mathcal{C}_{2}$ where no descendant-ancestor pair of oconnectives shares the same cluster. We call this property that $\mathcal{C}_2$ have the {\bf property 2} for later reference. Obviously, $\mathcal{C}_2$ also have the property 1. Since Rule I preserves truth in the bottom-up direction (Lemma \ref{niu}), $\mathcal{C}_{2}$ is valid. Next, our construction continues as follows.\vspace{2mm}

{\bf Step 3}: Let $i$ be a variable during this step to denote the number of iterations of the outmost loop of Step 3.
For $i=1$ to $n$, do the following:
\begin{itemize}
\item{\bf 3.1} Repeat the step below while there are non-singleton clusters of rank $i$ in $\mathcal{D}$.
\begin{itemize}
\item Pick any cluster $k$ from the collection of all non-singleton clusters of rank $i$ in $\mathcal{D}$. Repeat the following four steps until cluster $k$ becomes a singleton cluster in $\mathcal{D}$.
\begin{itemize}
\item{\bf 3.1.1} Pick any pair $a,b$ of oconnectives such that the following two conditions are satisfied: $a$, $b$ are both in cluster $k$; $\mathcal{L}^{\mathcal{D}}(\underline{ab})\geq\mathcal{L}^{\mathcal{D}}(\underline{cd})$ for any pair of oconnectives $c, d$ in cluster $k$. Set $m=2$ and $l=\mathcal{L}^{\mathcal{D}}(\underline{ab})$.\footnote{Here $m$ is a variable that records the number of elements of the collection of key oconnectives in the current cirquent. It is introduced into the process mainly for later difinitions and proofs.}

\item{\bf 3.1.2} Repeatedly perform the following two actions until $\mathcal{L}^{\mathcal{D}}(a)=l+1$: (i) Apply (bottom-up) Rule II to $\mathcal{D}$, with $a$ being the key oconnective of this application (in the conclusion); (ii) Rename the key oconnective of this application (in the premise) into $a$.

\item{\bf 3.1.3} Repeatedly perform the following two actions until $\mathcal{L}^{\mathcal{D}}(b)=l+1$: (i) Apply (bottom-up) Rule II to $\mathcal{D}$, with $b$ being the key oconnective of this application (in the conclusion); (ii) Rename the key oconnective of this application (in the premise) into $b$.

\item{\bf 3.1.4} Apply (bottom-up) Rule III to $\mathcal{D}$, with $a,b$ being the key oconnectives of this application (in the conclusion). Set $m=1$.
\end{itemize}
\end{itemize}
\item{\bf 3.2} Label all the oconnectives in rank $i$ of $\mathcal{D}$ with ``unused" and repeat the following step until there is no oconnectives in $\mathcal{D}$ labeled with ``unused":
\begin{itemize}\item Pick an unused oconnective $c$ in rank $i$ of $\mathcal{D}$ such that $\mathcal{L}^{\mathcal{D}}(c)\leq\mathcal{L}^{\mathcal{D}}(d)$ for any unused oconnective $d$ in rank $i$ of $\mathcal{D}$. Repeat the following until no longer possible and label $c$ with ``used":
 \begin{itemize}
 \item Apply (bottom-up) Rule IV to $\mathcal{D}$, with $c$ being the key oconnective of this application (in the conclusion);
 \item Rename the key oconnective of this application (in the premise) into $c$.
 \end{itemize}
\end{itemize}
\end{itemize}

During the above Step 3, property 1 always holds for the current cirquent $\mathcal{D}$. Below are the reasons. At the beginning of Step 3, property 1 holds for $\mathcal{D}$. Then, with some thought, one can see that in step 3.1.1 the oconnective $\underline{ab}$ for the picked $a$, $b$ is also in rank $i$ (otherwise, say, if $\underline{ab}$ is in rank $j$ with $j<i$, then Rule IV can be applied (bottom-up) to the current cirquent with $\underline{ab}$ being the key oconnective in the conclusion, which contradicts what the preceding (sub)steps 3.2 have done). So, each oconnective $c$, which is both an ancestor of $a$ (resp. $b$) and a descendant of $\underline{ab}$, is also in rank $i$. Thus, applying Rule II to $\mathcal{D}$ as described in step 3.1.2 (resp. 3.1.3) will not destroy its property 1. For the same reason, applying Rule III to $\mathcal{D}$ in step 3.1.4 will neither destroy its property 1. Finally, applying Rule IV to $\mathcal{D}$ in step 3.2 obviously will not destroy its property 1, either. This property ensures that, in the $i$th iteration of the outmost loop of Step 3 when Rule II is applied to the current cirquent in step 3.1.2 (resp. 3.1.3), $a$ (resp. $b$) and its parent $e$ are in the same rank $i$ and all the descendent of $e$ are in ranks $j$ with $j\geq i$. And hence Rule II can always be applied in step 3.1.2 (resp. 3.1.3). Similarly, this property also ensues that $a$ and $b$ are in the same rank as $\underline{ab}$ is at the beginning of step 3.1.4, and hence Rule III can always be applied in this step.

During the above Step 3, property 2 also holds for the current cirquent $\mathcal{D}$. To see why, note that there are no descendant-ancestor pairs within any given cluster in $\mathcal{D}$ at the beginning of Step 3. If so, picking any pair $a$, $b$ of oconnectives in step 3.1.1 will not give rise to the situation. Next, applying Rule II to $\mathcal{D}$ in step 3.1.2 (resp. 3.1.3) means that $\mathcal{L}^{\mathcal{D}}(a)$ (resp. $\mathcal{L}^{\mathcal{D}}(b)$) is greater than $\mathcal{L}^{\mathcal{D}}(\underline{ab})+1$; fourthly, during steps 3.1.1, 3.1.2 and 3.1.3, $l$'s being maximal ensures that, when applying Rule II as its introductory figure shows to the current cirquent $\mathcal{D}$, the subcirquent $\mathcal{C}$ does not contain any oconnective $e$ that is also in cluster $k$ (otherwise, $\mathcal{L}^{\mathcal{D}}(\underline{ae})>\mathcal{L}^{\mathcal{D}}(\underline{ab})$ (resp. $\mathcal{L}^{\mathcal{D}}(\underline{be})>\mathcal{L}^{\mathcal{D}}(\underline{ab})$), which contradicts the conditions satisfied by $a,b$). Finally, based on the above conditions, applying Rule III in step 3.1.4 and applying Rule IV in step 3.2 will not create any descendant-ancestor pairs within any given cluster, either. This property ensures that, when the current cirquent has non-singleton clusters $k$ of any rank $i$ during Step 3, the relation between any two different oconnectives in cluster $k$ will not be a descendant-ancestor pair.

Now we verify that Step 3 terminates in a finite number of steps, as shown in the following (i)---(iii).

(i) Firstly, we claim that, for the picked cluster $k$ from the collection of all non-singleton clusters of rank $i$ in the current cirquent $\mathcal{D}$, the four-step procedure (i.e. steps 3.1.1---3.1.4) terminates in a finite number of steps. To see why, we give the following definition.

For the current cirquent $\mathcal{D}$ at any given stage of the four-step procedure, we define the {\bf state} of $\mathcal {D}$ to be the four-tuple $(x,y,z,t)$ as follows, where $m_{\cal D}$, $a_{\cal D}$, $b_{\cal D}$ are the values of the corresponding three variables of the procedure at the beginning of stage $\cal D$:
\begin{itemize}
  \item  $x$ is the number of elements in cluster $k$ of rank $i$ of $\mathcal{D}$;
  \item $y=x-m_{\mathcal D}$;
  \item $z=\mathcal{L}^{\mathcal{D}}(a_{\cal D})+\mathcal{L}^{\mathcal{D}}(b_{\cal D})$ if $m_{\mathcal{D}}=2$, and $z=\mathcal{L}^{\mathcal{D}}(a_{\cal D})-1$ if $m_{\mathcal{D}}=1$;
  \item $t$ is the total number of elements in all other non-singleton clusters of rank $i$ of $\mathcal{D}$ except cluster $k$.
  \end{itemize}

Further, we define the relation ``$\leq$" on the set of all such tuples as follows. For any two tuples $(x_1,y_1,z_1,t_1)$ and $(x_2,y_2,z_2,t_2)$, $(x_1,y_1,z_1,t_1)\leq(x_2,y_2,z_2,t_2)$ if and only if one of the following conditions holds: (i) $x_1< x_2$; (ii) $x_1=x_2$ and $y_1< y_2$; (iii) $x_1=x_2$, $y_1=y_2$ and $z_1< z_2$; (iv) $x_1=x_2$, $y_1=y_2$, $z_1=z_2$ and $t_1< t_2$; (v) $x_1=x_2$, $y_1=y_2$, $z_1=z_2$ and $t_1=t_2$. It is easy to see that ``$\leq$" well-orders the set of all states, with each tuple (state) denoting an ordinal $<\omega^{4}$.

Now we show that every step of the four-step procedure strictly decreases the state of the current cirquent. One can see that the state of the current cirquent depends on cluster $k$ of rank $i$ and the picked pair of $a,b$. At the beginning of this procedure, for the picked cluster $k$ and the picked pair of $a,b$ in step 3.1.1, the state $(x,y,z,t)$ has its original value. In step 3.1.2 (resp. 3.1.3), every time substeps (i),(ii) are performed for $a$ (resp. $b$), $x,y$ do not change, but $z$ decreases by 1; then in step 3.1.4, when Rule III is applied, $x$ is decreased by 1. As long as cluster $k$ of rank $i$ is not a singleton, the process will come into the next iteration of the loop. During each iteration, a new pair of $a,b$ is picked and the value of $m$ is changed from $1$ to $2$ in step 3.1.1, which makes $x$ unchanged but $y$ is decreased by 1; then every iteration of the inner loop in step 3.1.2 leaves $x,y$ unchanged but decreases $z$ by 1; and then step 3.1.4 decreases $x$ by 1. Thus, the state keeps decreasing during the procedure, meaning that the latter terminates at some point.

(ii) Secondly, we show that, for the fixed $i$, step 3.1 ends up in finite steps. Suppose that, at the beginning of step 3.1 in the $i$th iteration of the outmost loop of Step 3, the number of non-singleton clusters of rank $i$ in the current cirquent $\mathcal{D}$ is $m$. Then, after the four-step procedure terminates for the first time, we get the resulting cirquent $\mathcal{D}_{1}$ where the number of non-singleton clusters of rank $i$ is $m-1$, since no new non-singleton clusters of rank $i$ are created during the four-step procedure and cluster $k$ became a singleton. Pick any cluster $k'$  from the collection of all non-singleton clusters of rank $i$ in  $\mathcal{D}_{1}$ and carry out the same steps as we did with $\mathcal{D}$, then we get the resulting cirquent $\mathcal{D}_{2}$, with $m-2$ non-singleton clusters of rank $i$. Every time the above steps are performed, the number of non-singleton clusters of rank $i$ in the current (topmost) cirquent decreases by 1. Therefore, sooner or later, we get the resulting cirquent $\mathcal{D}_{m}$ having no non-singleton clusters of rank $i$.

(iii) Finally, it's not hard to see that, for the fixed $i$, step 3.2 terminates in finite steps. Hence, the overall Step 3 will end up in finite steps.

Thus, our construction of a proof of $\mathcal{C}$ continues upward from $\mathcal{C}_2$ to the resulting cirquent $\mathcal{C}_3$ after applying Step 3. Since only Rule II, Rule III and Rule IV are applied during Step 3 when we get $\mathcal{C}_3$ from $\mathcal{C}_2$ and all of these rules preserve truth in the bottom-up direction, $\mathcal{C}_{3}$ is valid. Further, we see that all the clusters of each rank of $\mathcal{C}_3$ are singletons (because,  during Step 3, once all the clusters of any rank $i$ of the current cirquent become singletons, the later steps will not change them into non-singletons). Therefore, $\mathcal{C}_{3}$ is classical.
Hence, our construction of a proof of $\mathcal{C}$ ends up with the top most cirquent $\mathcal{C}_{3}$, which is an axiom of $RIF_p$.
\end{proof}

\end{document}